\theoremstyle{plain}
\newtheorem{theorem}{Theorem}
\newtheorem{lemma}{Lemma}
\newtheorem{definition}{Definition}
\newtheorem{proposition}{Proposition}
\theoremstyle{definition}
\newcommand{\Omit}[1]{}
\newcounter{Codeline}
\newcommand{\Int}{\mathbb{Z}}
\newcommand{\NT}{\textsc{FindEdges}}
\newcommand{\NTT}{\textsc{FindEdgesWithPromise}}
\newcommand{\ket}[1]{|#1\rangle}
\newcommand{\E}{\mathbb{E}}
\newcommand{\Pp}{\mathcal{P}}
\newcommand{\Tt}{\mathcal{T}}
\providecommand{\Aa}{\mathcal{A}}
\providecommand{\Qq}{\mathcal{Q}}
\providecommand{\Bb}{\mathcal{B}}
\providecommand{\Cc}{\mathcal{C}}
\providecommand{\Hh}{\mathcal{H}}
\providecommand{\Nn}{\mathcal{N}}
\providecommand{\Ii}{\mathcal{I}}
\providecommand{\Jj}{\mathcal{J}}
\providecommand{\Kk}{\mathcal{K}}
\providecommand{\Vv}{\mathcal{V}}
\providecommand{\uu}{\boldsymbol{u}}
\providecommand{\vv}{\boldsymbol{v}}
\providecommand{\ww}{\boldsymbol{w}}
\begin{document}

\title{Quantum Distributed Algorithm for the All-Pairs Shortest Path Problem in the CONGEST-CLIQUE Model\\\vspace{3mm}
}
\author{Taisuke Izumi\\
Graduate School of Engineering\\
Nagoya Institute of Technology\\
\url{t-izumi@nitech.ac.jp}
\and
Fran{\c c}ois Le Gall\\
Graduate School of Informatics\\
Kyoto University\\
\url{legall@i.kyoto-u.ac.jp}
}
\date{}

\maketitle
\thispagestyle{empty}
\setcounter{page}{0}
\begin{abstract}
The All-Pairs Shortest Path problem (APSP) is one of the most central problems in distributed computation. In the CONGEST-CLIQUE model, in which $n$ nodes communicate with each other over a fully connected network by exchanging messages of $O(\log n)$ bits in synchronous rounds, the best known general algorithm for APSP uses $\tilde O(n^{1/3})$ rounds. Breaking this barrier is a fundamental challenge in distributed graph algorithms. In this paper we investigate for the first time quantum distributed algorithms in the CONGEST-CLIQUE model, where nodes can exchange messages of $O(\log n)$ \emph{quantum} bits, and show that this barrier can be broken: we construct a $\tilde O(n^{1/4})$-round quantum distributed algorithm for the APSP over directed graphs with polynomial weights in the CONGEST-CLIQUE model. This speedup in the quantum setting contrasts with the case of the standard CONGEST model, for which Elkin et al.~(PODC 2014) showed that quantum communication does not offer significant advantages over classical communication. 

Our quantum algorithm is based on a relationship discovered by Vassilevska Williams and Williams (JACM 2018) between the APSP and the detection of negative triangles in a graph. The quantum part of our algorithm exploits the framework for quantum distributed search recently developed by Le Gall and Magniez (PODC 2018). Our main technical contribution is a method showing how to implement multiple quantum searches (one for each edge in the graph) in parallel without introducing congestions. 
\end{abstract}
\newpage
\section{Introduction}
{\bf Background.}
The CONGEST-CLIQUE model is a model in distributed computing that has recently been the subject of intensive research \cite{CKKLPS15,Dolev+DISC12,Drucker+PODC14,Hegeman+PODC15,Hegeman+SIROCCO14,Hegeman+DISC14,HKN16,LP13,Lenzen+STOC11,Lotker+SPAA03,Nanongkai14,PattShamir+PODC11,Jurdzinski+18,parter18,Ghaffari+18}. In this model~$n$ nodes communicate with each other over a fully connected network (i.e., a clique) by exchanging messages of $O(\log n)$
bits in synchronous rounds. Compared with the more traditional CONGEST 
model~\cite{Peleg00}, the CONGEST-CLIQUE model removes the effect of distances between nodes in the computation and thus focuses solely on understanding the role of congestion in distributed computing. 

The study of shortest path problems is one of the central topics in the context of distributed graph algorithms. In the CONGEST model, much progress has been done in the past years \cite{LP13,HW12,Nanongkai14,HKN16,GL18,HNS17,FN18,BN18,BKKL17,Elkin17,CKKLPS15}: while for exact computation of the Single-Source Shortest Path problem (SSSP) there is still a small gap between the upper bounds and the lower bounds~\cite{SHKKANPPW12,FN18}, for the All-Pairs Shortest Path problem (APSP) an algorithm with optimal time complexity (up to possible polylogarithmic factors) has been constructed very recently~\cite{BN18}. In the CONGEST-CLIQUE model, the first non-trivial result attaining sublinear running time was an algorithm by Nanongkai~\cite{Nanongkai14}, which solves the $(2 + \epsilon)$-approximate APSP over undirected weighted graphs within $\tilde{O}(\sqrt{n})$ rounds. This was improved by Censor-Hillel at al.~\cite{CKKLPS15}, who gave a $\tilde{O}(n^{1/3})$-round exact algorithm for the general APSP (i.e., the APSP over directed graphs with polynomial weights). While faster algorithms based on fast matrix multiplication have been designed for the APSP over graphs with small weights or for approximating the shortest paths \cite{CKKLPS15,LeGall16}, 
the above $\tilde{O}(n^{1/3})$-round is still not only the best known exact algorithm for the general APSP, but also the best known exact algorithm for SSSP in the CONGEST-CLIQUE model. 
\vspace{2mm}

\noindent{\bf Quantum distributed computing.}
The power of distributed network computation in the quantum CONGEST model has been first investigated by Elkin et al.~\cite{Elkin+PODC14}. In this model the nodes can use quantum processing and communicate using quantum bits (qubits): each edge of the network corresponds to a quantum channel (e.g., an optical fiber if qubits are implemented using photons) of bandwidth $O(\log n)$ qubits. Their main conclusion was that for many fundamental problems in distributed computing, including the computation of the $s$-$t$ shortest path in weighted graphs, quantum communication does not offer significant advantages over classical communication. A significant development recently happened: Le Gall and Magniez~\cite{LeGall+PODC18} constructed a quantum distributed algorithm in the CONGEST model computing the exact diameter within $\tilde{O}(\sqrt{nD})$ rounds, where~$D$ denotes the diameter. Since Frischknecht et al.~\cite{FHW12} have shown that any classical algorithm requires~$\tilde{\Omega}(n)$ rounds, even in the case $D = O(1)$, this gives a speedup (up to quadratic when the diameter is small). At the core of this quantum algorithm lies a distributed implementation of Grover's seminal quantum algorithm \cite{GroverSTOC96}. Grover's algorithm achieves a quadratic speedup over brute-force search for generic search problems in the centralized setting. The algorithm from \cite{LeGall+PODC18} carefully adapts Grover's algorithm to the distributed CONGEST model and shows how to combine it with a classical distributed algorithm in a completely black-box way. Due to its versatility, this approach has the potential of accelerating many graph algorithms. A pressing open question is to understand for which problems in distributed computing it can actually help. \vspace{2mm}

\noindent{\bf Our result.}
While it is tempting to consider potential quantum acceleration of computing shortest paths using the distributed version of Grover's algorithm, there are several significant obstacles. The distributed quantum diameter algorithm from \cite{LeGall+PODC18} crucially relies on reducing the computation of the diameter to the search problem of finding a node with the maximum eccentricity, and this strategy does not directly work for shortest path problems. Indeed, as already mentioned, it is known that for $s$-$t$ shortest paths over weighted graphs, quantum communication cannot offer any significant speedup in the CONGEST model. Even over unweighted graphs, in the CONGEST model it is easy to extend the classical lower bound from \cite{FHW12} to show a $\tilde{\Omega}(n)$-round lower bound for the APSP that holds even in the quantum setting.

In this paper we show that a speedup is possible in the CONGEST-CLIQUE model. Our main result is the following theorem. 
\begin{theorem}\label{th:APSP}
There is a quantum algorithm that solves with high probability the All-Pairs Shortest Path problem over directed graphs with integer weights in $\{-W,\ldots,W\}$ using $\tilde O(n^{1/4}\log W)$ rounds in the CONGEST-CLIQUE model. 
\end{theorem}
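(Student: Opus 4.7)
My plan is to chain a classical reduction with a new quantum distributed primitive.

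First, using the Vassilevska Williams--Williams framework I would reduce APSP on a directed $n$-vertex graph with weights in $\{-W,\ldots,W\}$ to $\tilde O(\log W)$ instances of a \emph{negative triangle finding} subroutine: given $n\times n$ integer matrices $A,B$ and a threshold matrix $T$, with the $i$-th row of each held by node $i$, find for every pair $(i,j)$ a witness $k$ with $A[i,k]+B[k,j]<T[i,j]$, if any exists. APSP is then obtained via $O(\log n)$ repeated-squaring rounds of $(\min,+)$-product, each of which can be computed from the negative triangle primitive plus binary search over thresholds. This reduction is classical and incurs only polylogarithmic overhead, so proving Theorem~\ref{th:APSP} reduces to solving negative triangle finding in $\tilde O(n^{1/4})$ rounds of quantum CONGEST-CLIQUE.

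Second, I would attack negative triangle finding by running, for each pair $(i,j)$, a distributed Grover search over the $n$ candidate witnesses $k$, plugged into the quantum distributed search framework of Le Gall and Magniez. A single such search requires $O(\sqrt n)$ Grover iterations, each of which must evaluate a distributed oracle that fetches the scalars $A[i,k]$ and $B[k,j]$ located at nodes $i$ and $k$, compares their sum to $T[i,j]$, and feeds the result back as a phase.

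The principal obstacle will be running the $\Theta(n^2)$ parallel searches simultaneously without exceeding the clique's $O(n^2)$-message per-round bandwidth. A fully naive parallelization already saturates the network per Grover iteration and still needs $\sqrt n$ iterations, giving at best $\tilde O(\sqrt n)$ rounds---far from $n^{1/4}$. To push down to $n^{1/4}$ I would exploit structural sharing of the oracle data: since each row $A[i,\cdot]$ is relevant to the $n$ searches indexed by $(i,\cdot)$ and each row $B[k,\cdot]$ to the $n$ searches indexed by $(\cdot,k)$, I would partition the witness set $[n]$ into blocks whose size is tuned against the Grover depth. For each block, all relevant rows/columns can be broadcast across the clique in $\tilde O(1)$ rounds via Lenzen routing, after which a batch of Grover iterations can proceed with amortized per-pair communication cost. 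The core technical lemma would show that these broadcasts can be interleaved with the Grover amplitude amplification without disturbing its correctness, and that choosing block size $\Theta(n^{1/2})$ (so that Grover depth within a block is $\Theta(n^{1/4})$ and the number of block-broadcasts is also $\Theta(n^{1/4})$) balances the two costs to yield $\tilde O(n^{1/4})$ total rounds. Plugging this primitive into the first reduction then delivers Theorem~\ref{th:APSP}.
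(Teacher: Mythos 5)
Your high-level skeleton matches the paper's: reduce APSP to $\tilde O(\log n\log(nW))$ distance products via repeated squaring, reduce each distance product to negative-triangle detection via the Vassilevska Williams--Williams binary-search trick, and then attack the triangle problem with $\Theta(n^2)$ parallel distributed Grover searches whose effective depth is brought down to $n^{1/4}$ by grouping the witness set into blocks of size $\sqrt{n}$. However, the mechanism you propose for the last step does not work as stated, and it omits the two ingredients that constitute the paper's actual technical contribution. First, the arithmetic: partitioning $[n]$ into blocks of size $\Theta(n^{1/2})$ yields $\Theta(n^{1/2})$ blocks, not $\Theta(n^{1/4})$, so iterating ``block-broadcast then Grover within the block'' costs $\tilde\Omega(n^{1/2})$ rounds; and nesting Grover over blocks with Grover inside blocks just recovers depth $n^{1/4}\cdot n^{1/4}=n^{1/2}$. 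The correct move (the paper's) is to make the $\sqrt{n}$ blocks themselves the Grover search domain (depth $n^{1/4}$) and to make each oracle call cost only $\tilde O(1)$ rounds by delegating the within-block check to a dedicated node that has \emph{pre-loaded}, in a one-time $O(n^{1/4})$-round preprocessing, all $O(n^{5/4})$ edge weights it could ever need; this requires the three-way labeling $(\uu,\vv,\ww)$ with $|\uu|=|\vv|=n^{3/4}$ and $|\ww|=n^{1/2}$, which your proposal does not set up.

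Second, even with the right partition, the congestion problem is not solved by interleaving broadcasts: each coordinating node runs $\tilde\Theta(n)$ Grover searches in superposition, so the query pattern at each oracle call is itself a superposition, and some branches (e.g., all searches querying the same block) force a single destination node to receive $\tilde\Theta(n^{3/2})$ messages. The paper handles this with a dedicated theorem showing that the joint Grover state remains close to its projection onto ``well-balanced'' query patterns, so the oracle may simply fail on unbalanced inputs with negligible loss in success probability. Crucially, this only works because the problem is first reduced (by random edge-sampling) to a promise version in which every edge lies in $O(\log n)$ negative triangles; that promise bounds the number of ``heavy'' blocks per pair of parts and enables the class-by-class load balancing. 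Your proposal contains neither the promise reduction nor any argument controlling the superposed query load, and the sentence about broadcasts ``not disturbing amplitude amplification'' does not identify, let alone close, this gap.
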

As already mentioned, the best known upper bound for the APSP in the classical CONGEST-CLIQUE is due to Censor-Hillel at al.~\cite{CKKLPS15}: for graphs with integer weights in $\{-W,\ldots,W\}$ their upper bound is $\tilde{O}(n^{1/3} \log W)$ rounds. While no nontrivial lower bound is known on the classical complexity of APSP in the CONGEST-CLIQUE model, which is not surprising due to the technical challenges of proving any nontrivial lower bound in this model,
the current $\tilde{O}(n^{1/3} \log W)$ bound appears as a significant barrier for classical algorithms. Our quantum algorithm breaks this barrier. This gives strong evidence for the superiority of quantum distributed computing over classical distributed computing in the CONGEST-CLIQUE model as well (unless the barrier can be broken in the classical setting as well --- this would in itself be a significant breakthrough). Another interesting observation is that this quantum speedup occurs for a problem (the APSP) for which no quantum speedup can be achieved in the standard CONGEST model, as already mentioned.

\vspace{2mm}

\noindent{\bf Technical overview.}
The first step of our approach consists in reducing the APSP problem to the problem of detecting \emph{negative triangles} (triangles in which the sum of the weights of the three edges is negative). This reduction is inspired by the recent breakthrough by Vassilevska Williams and Williams~\cite{Williams+JACM18} in centralized algorithms that revealed the relationship between the APSP and triangle detection, via the computation of the distance product of two matrices. More precisely, our approach reduces the APSP to the problem of identifying all the edges of the graph that are involved in (at least) one negative triangle, under the promise that each edge is involved in at most $O(\log n)$ negative triangles.

In order to solve the latter problem, we would like to design an algorithm running a quadratic number of instances of negative-triangle detection simultaneously, since in the worst case $\Theta(n^2)$ edges involved in negative triangles need to be detected. Due to the fact that the query sequence generated by a single run of the distributed version of Grover's algorithm is a quantum superposition, a naive parallelization would result in high congestion of query messages, causing delays and degradation of the running time. To overcome this difficulty, we develop a novel machinery ensuring that all the parallel runs of the quantum searches are fairly load balanced, which resolves the problem of congestions. This is done by analyzing carefully the error probability of multiple quantum searches and showing that (for the problem considered) ignoring the queries that are not load balanced does not decrease significantly the success probability.

\vspace{2mm}

\noindent{\bf Other related works.}
As already mentioned, triangle detection and matrix multiplication are closely related to the APSP problem. There are several results considering those problems in the CONGEST or CONGEST-CLIQUE models~\cite{Dolev+DISC12,CKKLPS15,LeGall16,Izumi+PODC17,Pandurangan+SPAA18,Chang+SODA19,Chang+19}. In the CONGEST-CLIQUE model, in particular, an $\tilde{O}(n^{1/3})$-round algorithm for listing all triangles is proposed by Dolev et al.~\cite{Dolev+DISC12}. This algorithm is combinatorial (i.e., non-algebraic) and thus works for listing negative triangles as well. Combined with our reduction from APSP to negative triangles, this can be used to construct a classical distributed APSP algorithm in the CONGEST-CLIQUE with the same complexity  $\tilde{O}(n^{1/3} \log W)$ as the algorithm by Censor-Hillel~\cite{CKKLPS15}. While there exist faster algorithms for triangle detection~\cite{CKKLPS15,Drucker+PODC14,LeGall16}, all these faster algorithms are based on an algebraic approach (more precisely, a reduction to matrix multiplication over a ring), and cannot be used to find negative triangles (which corresponds to matrix multiplication over a semiring).

To our knowledge the present work is the first to consider the quantum CONGEST-CLIQUE model. We already mentioned the prior works \cite{Elkin+PODC14,LeGall+PODC18} on the quantum CONGEST model. Besides the vast literature on two-party quantum communication complexity (see, e.g.,~\cite{deWolf02,Broadbent+08,Denchev+08}), there exist a few works that considered other settings in quantum distributed computing. First, exact quantum protocols for leader election in anonymous networks have been developed by Tani et al.~\cite{Tani+12}. Gavoille et al.~\cite{Gavoille+DISC09} then considered quantum distributed computing in the LOCAL model, and showed that for several fundamental problems, allowing quantum communication does not lead to any significant advantage. Very recently Le Gall et al.~\cite{LeGall+STACS19} showed that there nevertheless exist some computational problems for which quantum distributed computing can be much more powerful than classical distributed computing in the LOCAL model.


\section{Preliminaries}\label{sec:prelim}
\paragraph{General notations.} Given any positive integer $p$, we use the notation $[p]$ to represent the set $\{1,2,\ldots,p\}$. Given a graph $G=(V,E)$ and any two sets $U,U'\subseteq V$, we write $\Pp(U,U')$ the set of pairs of vertices $\{u,v\}$ with $u\in U$, $v\in U'$ and $u\neq v$. When $U'=U$ we simply write $\Pp(U)=\Pp(U,U)$. Finally, for any vertex $v\in V$ we write $\Nn_G(v)$ the set of neighbors of $v$.

\vspace{-3mm}

\paragraph{Quantum CONGEST-CLIQUE model.}
Recent definitions of the quantum CONGEST model \cite{LeGall+PODC18} and the quantum LOCAL model \cite{LeGall+STACS19} are obtained by starting with the corresponding classical model (classical CONGEST model and LOCAL model, respectively) and simply allowing nodes to send quantum information instead of classical information. In this paper we use the same approach to define a natural quantum version of the CONGEST-CLIQUE model.

In the classical CONGEST-CLIQUE model, $n$ nodes communicate with each other over a fully connected network by exchanging messages of $O(\log n)$ bits in synchronous rounds. All links and nodes (corresponding to the edges and vertices of $G$, respectively) are reliable and suffer no faults. Each node has a distinct identifier. In the quantum CONGEST-CLIQUE model the only difference is that the nodes can exchange quantum information: each message exchanged consists of $O(\log n)$ quantum bits instead of $O(\log n)$ bits in the classical case. In particular, initially the nodes of the network do not share any entanglement.

This paper will describe many classical algorithms and procedures that will be used for pre-processing and  post-processing (or even used inside the main quantum part as a subprocedure). We will use many times (sometimes implicitly) the following Lemma by Dolev et al.~\cite{Dolev+DISC12}.
\begin{lemma}{\cite{Dolev+DISC12}}\label{lemma:Dolev}
In the CONGEST-CLIQUE model a set of messages in which no node is the source of more than $n$ messages and no node is the destination of more than $n$ messages can be delivered within two rounds if the source and destination of each message is known in advance to all nodes.  
\end{lemma}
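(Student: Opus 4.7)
The plan is to reduce the routing task to edge coloring a bipartite multigraph, and then use the $n$ nodes of the clique as relays indexed by the color classes. Let $M$ denote the set of messages to be routed, and build a bipartite multigraph $H$ whose left and right vertex sets are both copies of $V$, with one edge from $s$ (left) to $d$ (right) for every message in $M$ with source $s$ and destination $d$. By hypothesis every vertex of $H$ has degree at most $n$. I would first pad $H$ with dummy edges using a fixed canonical rule (for instance, adding dummy edges in lexicographic order of endpoint pairs until every vertex has degree exactly $n$) to obtain an $n$-regular bipartite multigraph $H'$.

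Next I would invoke K\"onig's edge-coloring theorem for bipartite multigraphs, which guarantees that $H'$ admits a proper edge-coloring with exactly $n$ colors; identify this color set with $V$. Routing then proceeds in two rounds. In round~1, the source $s$ of each real message $m$ colored $c \in V$ sends $m$ to node $c$. Since the coloring is proper, source $s$ has at most one message of each color and thus sends at most one message to any given relay, respecting the $O(\log n)$-bit link bandwidth. After round~1, node $c$ holds all color-$c$ messages, and because color class $c$ is a matching in $H'$, at most one such message is destined for any given $d \in V$. In round~2, node $c$ forwards each held message to its true destination, again using each direct link at most once.

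The main obstacle is to ensure that every node of the clique derives the same color for every message without any communication, since round~1 must begin immediately. This is handled by the precondition of the lemma that the source and destination of each message is known in advance to all nodes: every node can locally reconstruct $H'$ (the padding being canonical) and run the same deterministic polynomial-time bipartite edge-coloring algorithm on it, obtaining identical color assignments. Once this consistency is secured, the two-round congestion-free routing follows directly from the structural fact that each color class is a perfect matching of $H'$, and the bandwidth check at each link is routine.
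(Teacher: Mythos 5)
Your argument is correct, and since the paper only cites this lemma from Dolev et al.\ without reproducing a proof, the natural comparison is with the cited source: there the routing is justified in essentially the same way, by viewing the demands as a bipartite multigraph, decomposing it (after padding to $n$-regularity) into $n$ perfect matchings via K\"onig's theorem, and using each matching's index as the relay for its messages. Your additional remarks on canonical padding and on every node computing the same coloring locally correctly account for the ``known in advance to all nodes'' hypothesis, so nothing is missing.
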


\paragraph{Graph-theoretic problems in the CONGEST-CLIQUE model.}
When studying graph-theoretic problems such as the APSP problem in the classical or quantum CONGEST-CLIQUE model, the input is a graph~$G=(V,E)$ consisting of $n$ nodes, i.e., the number of nodes of the graph is the same as the number of nodes of the communication network. This means that we can assign to each node of the network a distinct label $u\in V$. The input is given as follows: each node with label~$u$ of the network receives the row of the adjacency matrix of $G$ corresponding to vertex~$u$ of~$G$. The result of the computation is defined similarly: for the APSP the node with label $u$ should output the shortest distance from $u$ to all the other nodes in $G$. We refer to \cite{CKKLPS15} for details.
\section{APSP and negative triangles}\label{sec:APSP}
In this section we show how to reduce the APSP to finding all the edges involved in a negative triangle. 
We first define the latter problem and state the main technical result of this paper (Theorem \ref{th:main}). Then we show the computation of the distance product of a matrix reduces to this problem. Finally, we recall the standard reduction from APSP to the computation of the distance product and derive Theorem \ref{th:APSP} from Theorem \ref{th:main}.

\paragraph{Finding the edges in negative triangles.}\label{sub:APSP1}
Consider an undirected weighted graph $G=(V,E,f)$ with weight function $f\colon E\to\Int$. For an edge $\{u,v\}\in E$, we use the notation $f(u,v)$ instead of $f(\{u,v\})$. 
\begin{definition}
Given three vertices $u,v,w\in V$, we say that the triple $\{u,v,w\}$ is a negative triangle in $G$ if $\{u,v\}$, $\{u,w\}$ and $\{v,w\}$ are edges and the inequality $f(u,v)+f(u,w)+f(v,w)<0$ holds. 
\end{definition}

For any pair $\{u,v\}\in\Pp(V)$, we use the notation $\Gamma_G(u,v)$ to denote the number of negative triangles involving $\{u,v\}$, i.e.,
$
\Gamma_G(u,v)
=\left|\left\{
w\in V \:|\: \{u,v,w\} \textrm{ is a negative triangle in }G
\right\}\right|.
$
We simply write $\Gamma(u,v)$ when the graph $G$ is clear from the context.

We now define the main problem considered in this paper. This problem, which we denote $\NT$,  asks to compute the list of all edges involved in a negative triangle. The formal definition of the problem is as follows.

\begin{center}
\fbox{
\begin{minipage}{15 cm} \vspace{2mm}

\noindent$\NT$\\\vspace{-3mm}

\noindent\hspace{3mm} Input: an undirected weighted graph $G=(V,E,f)$ distributed among the $n$ nodes 

\noindent\hspace{15mm} of the network
(each node $u$ gets $\Nn_G(u)$)  \vspace{2mm}

\noindent\hspace{3mm} Output: each node $u$ outputs the list of all pairs $\{u,v\}\in \Pp(V)$ such that $\Gamma(u,v)>0$
\vspace{2mm}
\end{minipage}
}
\end{center}

Let us now consider the version of this problem in which we have the promise $\Gamma(u,v)=O(\log n)$ for all pairs $\{u,v\}$. It will actually be convenient to define a more general problem where there is an additional input $S\subseteq\Pp(V)$, the promise only holds for the pairs in $S$ and we only require each node to output the edges in $S$ that are involved in a negative triangle. The definition of this version with promise, which we call $\NTT$, follows.

\begin{center}
\fbox{
\begin{minipage}{15 cm} \vspace{2mm}

\noindent$\NTT$\\\vspace{-3mm}

\noindent\hspace{3mm} Input: an undirected weighted graph $G=(V,E,f)$ and a set $S\subseteq \Pp(V)$ distributed 

\noindent\hspace{15mm}
among the $n$ nodes of the network

\noindent\hspace{15mm}
(each node $u$ gets $\Nn_G(u)$ and the list of all pairs in $S$ containing $u$)  \vspace{2mm}

\noindent\hspace{3mm} Promise: $\Gamma(u,v)\le 90\log n$ for all pairs $\{u,v\}\in S$\vspace{2mm}

\noindent\hspace{3mm} Output: each node $u$ outputs the list of all pairs $\{u,v\}\in S$ such that $\Gamma(u,v)>0$
\vspace{2mm}
\end{minipage}
}
\end{center}

It is not difficult to show a randomized reduction from solving $\NT$ to solving $O(\log n)$ instances of $\NTT$. We state this reduction in the following proposition.
\begin{proposition}\label{prop:promise-to-nopromise}
Assume there exists a $T(n)$-round algorithm that solves $\NTT$ with probability at least $1-\varepsilon$ for some $\varepsilon>0$. Then there exists a $O(T(n)\log n)$-round algorithm that solves the problem $\NT$ with probability at least $1-O((\varepsilon+1/n^3)\log n)$.
\end{proposition}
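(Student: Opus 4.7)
The plan is to reduce $\NT$ to $O(\log n)$ calls of $\NTT$ by iterative random sampling of vertex subsets, processed in decreasing order of density. For each level $i\in\{0,1,\ldots,\lceil\log n\rceil\}$, I would sample $V_i\subseteq V$ by including each vertex independently with probability $p_i=2^{-i}$ and construct a tripartite auxiliary graph $\hat G_i$ on at most $3n$ vertices so that the negative triangles of $\hat G_i$ are in bijection with negative triangles $\{u,v,w\}$ of $G$ whose designated ``witness'' vertex $w$ lies in $V_i$. Concretely, take three copies $V_L,V_M,V_R$ of $V$ and include $\{u_L,v_M\}$ with weight $f(u,v)$ for every $\{u,v\}\in E$, $\{u_L,w_R\}$ with weight $f(u,w)$ for every $\{u,w\}\in E$ with $w\in V_i$, and $\{v_M,w_R\}$ with weight $f(v,w)$ for every $\{v,w\}\in E$ with $w\in V_i$. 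Since there are no intra-part edges, every triangle of $\hat G_i$ must be of the form $\{u_L,v_M,w_R\}$ with $\{u,v,w\}$ a triangle of $G$ and $w\in V_i$, so for each $\{u,v\}\in\Pp(V)$ the pair $\{u_L,v_M\}$ satisfies $\Gamma_{\hat G_i}(u_L,v_M)=|\{w\in V_i:\{u,v,w\}\text{ is a negative triangle in }G\}|$. The constant blowup in vertex count is absorbed by a constant-factor simulation overhead in the CONGEST-CLIQUE model.

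The reduction iterates the levels in decreasing order $i=\lceil\log n\rceil,\ldots,1,0$, maintaining a running list $L\subseteq\Pp(V)$ of detected pairs. At level $i$ it invokes $\NTT$ on $(\hat G_i,S_i)$ with $S_i=\{\{u_L,v_M\}:\{u,v\}\in\Pp(V)\setminus L\}$ and adds the corresponding original pairs returned into $L$. The analysis rests on the inductive claim that, with high probability, after processing level $i$ every pair $\{u,v\}$ with $\Gamma_G(u,v)\ge 30\cdot 2^i\log n$ lies in $L$. The base case $i=\lceil\log n\rceil$ is vacuous. For the inductive step, the hypothesis implies that every pair still in $S_i$ has $\Gamma_G<30\cdot 2^{i+1}\log n$, so its binomial witness count in $V_i$ has mean below $60\log n$; a Chernoff bound then gives a probability at most $1/n^6$ per pair that the count exceeds $90\log n$, and a union bound over the $O(n^2)$ pairs in $S_i$ shows that the $\NTT$ promise holds globally at level $i$ with probability at least $1-1/n^3$. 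Conditional on this event and on $\NTT$ succeeding, every remaining pair with $\Gamma_G\ge 30\cdot 2^i\log n$ has expected witness count at least $30\log n$ and hence a strictly positive witness count with probability at least $1-1/n^{30}$; thus it is returned by $\NTT$ and added to $L$.

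At the final iteration $i=0$, $V_0=V$ and $p_0=1$, so $\Gamma_{\hat G_0}$ equals $\Gamma_G$ itself, which by the induction is at most $60\log n<90\log n$; the promise is thus deterministically satisfied and every leftover pair with $\Gamma_G\ge 1$ is detected. Summing failure probabilities over the three sources---$O(\log n\cdot\varepsilon)$ from the $\NTT$ calls, $O(\log n/n^3)$ from the per-level Chernoff analysis, and a negligible $O(1/n^{28})$ from missed detections---yields the claimed bound $O((\varepsilon+1/n^3)\log n)$. The hardest step I expect is the balancing of the Chernoff constants: the per-pair promise (upper-tail control around mean $60\log n$ with threshold $90\log n$) and the detection guarantee (lower-tail control around mean $30\log n$) concern opposite tails of the same binomial, and the precise constant $90$ in the statement of $\NTT$ is calibrated exactly so that these two bounds remain simultaneously achievable as the geometric schedule $p_i=2^{-i}$ shifts the relevant window across levels.
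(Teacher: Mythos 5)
Your proof is correct, and its skeleton is the same as the paper's: a geometric schedule of $O(\log n)$ subsampling rates, an upper-tail Chernoff bound to certify the $90\log n$ promise for the surviving pairs, a lower-tail/zero-count bound to guarantee that the dense pairs are detected and peeled off, and a final exact call once every remaining pair provably satisfies the promise. The difference is in what gets subsampled. The paper keeps the graph $G$ itself and, at iteration $i$, retains each \emph{edge} independently with probability $\sqrt{60\cdot 2^i\log n/n}$, so that each witness $w$ of a pair $\{u,v\}$ survives with probability about $60\cdot 2^i\log n/n$; the thresholds $n/2^i$ decrease as the loop index increases. You instead build a tripartite lift $\hat G_i$ and thin only the \emph{witness} side by a vertex sample $V_i$ of rate $2^{-i}$, running the levels from sparse to dense. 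Your variant has two small advantages: the witness count $\Gamma_{\hat G_i}(u_L,v_M)$ is an honest binomial with independent trials (in the edge-sampling version the survival events of different witnesses are independent only after conditioning on $\{u,v\}$ itself being sampled, a point the paper glosses over), and the pair $\{u_L,v_M\}$ is always present so detection is never lost to the pair's own edge being dropped. The price is that each call runs on a $3n$-vertex instance, which you must simulate on the $n$-node clique; this costs only a constant factor in bandwidth and replaces $T(n)$ by $T(3n)$ per call, which is harmless for the application ($T(n)=\tilde O(n^{1/4})$) but is a mild abuse of the literal statement for arbitrary $T$. The constants you chose ($30\cdot 2^i\log n$ thresholds, mean below $60\log n$, promise cutoff $90\log n$) balance the two tails exactly as in the paper, and the final accounting $1-O((\varepsilon+1/n^3)\log n)$ matches.
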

\begin{proof}
Let $\Aa$ denote the $T(n)$-round algorithm for $\NTT$. We construct an algorithm for $\NT$ as follows.
\begin{itemize}
\item[1.]
$S\gets \Pp(V)$; $M\gets \emptyset$; $i\gets 0$.
\item[2.]
While $60\cdot 2^i\log n\le n$ do:
\begin{itemize}
\item[2.1.]
Sample each edge of $G$ with probability $\sqrt{\frac{60\cdot 2^i\log n}{n}}$. Let $G'$ be the subgraph of $G$ consisting only of the sampled edges.
\item[2.2.]
Apply the algorithm $\Aa$ on input $(G',S)$. Let $S'$ be the output of the algorithm.
\item[2.3.]
$S\gets S\setminus S'$;
$M\gets M\cup S'$;
$i\gets i+1$.
\end{itemize}
\item[3.]
Apply the algorithm $\Aa$ on input $(G,S)$. Let $S''$ be the output of the algorithm.
\item[4.]
Output $M\cup S''$.
\end{itemize}
Let us call Algorithm $\Bb$ the algorithm we just described. Its round complexity is $O(T(n)\log n)$.

Let us first analyze Algorithm $\Bb$ under the assumption that Algorithm $\Aa$ never makes any error. 
We will prove below by induction the following invariant for the while loop: when testing the exit condition ``$60\cdot 2^i\log n\le n$'' at Step 2 of the while loop for some value $i$, we have $\Gamma_G(u,v)\le n/2^i$ for all $\{u,v\}\in S$, and all the pairs $\{u,v\}\in\Pp(V)$ such that $\Gamma_G(u,v)>n/2^i$ are already  contained in $M$.
This shows that at the end of the while loop we have $\Gamma_G(u,v)\le n/2^c$ for all $\{u,v\}\in S$, and all the pairs $\{u,v\}$ such that $\Gamma_G(u,v)>n/2^c$ are contained in $M$, where $c$ is the smallest integer such that $60\cdot 2^c\log n> n$. Since $n/2^c< 90\log n$, the call to Algorithm $\Aa$ at Step 3 then finds all the remaining pairs involved in negative triangles and the output at Step 4 is precisely the output of $\NT$.

The loop invariant is obviously satisfied for $i=0$. Now assume that it is satisfied when $i=k$ for some $k\ge 0$ and let us consider what is happening at Step 2.2. Consider any pair $\{u,v\}\in S$. 
From the induction hypothesis we have $\Gamma_G(u,v)<n/2^i$.
Note that $\E[\Gamma_{G'}(u,v)]=\Gamma_G(u,v)\times \frac{60\cdot 2^i\log n}{n}\le60\log n$. Chernoff's bound then implies
\[
\Pr[\Gamma_{G'}(u,v)>90\log n]\le \exp\left(-\frac{60\log n}{12}\right)<\frac{1}{n^5},
\]
which means that with probability at least $1-1/n^3$ the promise required to execute Algorithm~$\Aa$ is satisfied for all $\{u,v\}\in S$.
Let us now consider a pair $\{u,v\}\in S$ such that the inequality $\Gamma_G(u,v)>n/2^{i+1}$ holds. We have
\[
\Pr[\Gamma_{G'}(u,v)=0]=\left(1-\frac{60\cdot 2^i\log n}{n}\right)^{\Gamma_G(u,v)}
<\exp\left(-\Gamma_G(u,v)\times \frac{60\cdot 2^i\log n}{n}\right)<\frac{1}{n^{30}}
\]
and thus the pair $\{u,v\}$ is included in the output $S'$ of Algorithm~$\Aa$, and thus removed from $S$ (and added to $M$) at Step 2.3, with high probability. 
This proves that the loop invariant is satisfied for $i=k+1$ as well with probability at least $1-1/n^3-1/n^{28}$.

We have thus shown that under the assumption that Algorithm~$\Aa$ never makes any error, our algorithm solves  $\NT$ with probability at least $1-c/n^3-c/n^{28}$. Since the error probability of Algorithm~$\Aa$ is at most $\varepsilon$ and $\Aa$ is applied $c+1$ times, the union bound implies that our algorithm solves the problem $\NT$ with probability at least $1-c/n^3-c/n^{28}-(c+1)\varepsilon=1-O((\varepsilon+1/n^3)\log n)$.
\end{proof}

The main technical contribution of this paper is the following theorem, which is proved in Section \ref{sec:alg}.

\begin{theorem}\label{th:main}
There is a $\tilde O(n^{1/4})$-round quantum algorithm that solves with probability $1-O(1/n)$ the problem $\NTT$  in the CONGEST-CLIQUE model.
\end{theorem}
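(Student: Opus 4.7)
The plan is to run one instance of distributed quantum search per pair $\{u,v\}\in S$, in parallel across all pairs. For each pair the search problem is: find $w\in V$ with $f(u,v)+f(u,w)+f(v,w)<0$. Treated in isolation, such a search over $V$ fits squarely into the Le Gall--Magniez distributed Grover framework~\cite{LeGall+PODC18}. To reach $\tilde O(n^{1/4})$ I would first coarsen the search space by partitioning $V$ into $\sqrt n$ buckets of size $\sqrt n$; each pair's search then becomes ``find a bucket containing a witness,'' a Grover problem of size $\sqrt n$ with $\tilde O(n^{1/4})$ iterations. For a single pair in isolation, each bucket-oracle query (``does this bucket contain a witness for $\{u,v\}$?'') can be answered in $O(1)$ rounds by prerouting, via Lemma~\ref{lemma:Dolev}, the values $f(u,w)$ and $f(v,w)$ inside the bucket to a designated bucket leader who returns a single bit.

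The genuine difficulty is the parallelism: $|S|$ may be as large as $\binom{n}{2}$, so $\Theta(n^2)$ pairs simultaneously dispatch quantum queries at each Grover iteration. Because each pair's query is quantumly routed to a superposition of bucket leaders, the per-destination message count can be unbalanced even when the aggregate respects Lemma~\ref{lemma:Dolev}, so a na\"ive implementation does not give $O(1)$ rounds per iteration. My plan is to fix in advance, for each round, a deterministic schedule that caps how many pair-queries any given bucket leader is asked to service, and to implement the Grover iterate only on the ``scheduled'' subspace by projecting out every component of every pair's quantum state whose query would violate the cap. The resulting per-iteration operator $\tilde U_{uv}$ on each pair's register is a sub-isometry obtained from the ideal Grover iterate $U_{uv}$ by zeroing out a controlled fraction of amplitudes.

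The hard part is the error analysis: one must show that the truncation preserves Grover amplification for all but an $O(1/n)$ fraction of the pairs. For each pair $\{u,v\}$ I would bound the trace-norm distance between the truncated final state and the ideal Grover state by the total amplitude removed over the $\tilde O(n^{1/4})$ iterations. Two ingredients should make this bound small. First, the promise $\Gamma(u,v)\le 90\log n$ limits the number of witness buckets, so throughout the search only a modest amplitude sits on witness buckets and the truncation decisions are nearly independent of the identity of the ``good'' bucket. Second, a randomization of the bucket assignment together with a random choice of schedule should yield that the \emph{expected} amplitude truncated from any fixed pair at any fixed iteration is $\tilde O(1/n)$; summing over iterations and taking a union bound over pairs keeps the global error at $O(1/n)$. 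The subtlest step I anticipate is decorrelating the schedule from each pair's Grover state so that this pair-wise expectation bound really does apply --- this is precisely what the technical overview refers to when it says that ``ignoring the queries that are not load balanced does not decrease significantly the success probability.'' Combined with the standard Grover success probability for the non-truncated component, this yields the claimed $1-O(1/n)$ success probability for $\NTT$ in $\tilde O(n^{1/4})$ rounds.
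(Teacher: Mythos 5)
Your high-level plan is in the same spirit as the paper's---parallel Grover searches over a bucketing $\Vv'$ of $V$ into $\sqrt n$ buckets of size $\sqrt n$, with a ``truncate the overloaded queries'' step to tame congestion---but several of the ingredients that actually make the argument close are missing or pointed in the wrong direction.

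\textbf{Missing structure for routing and for grouping the searches.} You propose routing $f(u,w)$ and $f(v,w)$ for all relevant $u,v$ to a designated leader of bucket $\ww$. That leader would need $\Theta(n^{3/2})$ weights, which by Lemma~\ref{lemma:Dolev} already costs $\Theta(\sqrt n)$ rounds and blows the budget. The paper instead uses a three-way labeling $\Tt=\Vv\times\Vv\times\Vv'$ (with $|\Vv|=n^{1/4}$ blocks of size $n^{3/4}$) so that node $(\uu,\vv,\ww)$ only stores $O(n^{5/4})$ weights ($O(n^{1/4})$ rounds). More importantly, the paper also groups the pairs among $n$ nodes $(\uu,\vv,x)$, each holding $m=\tilde\Theta(n)$ pairs and hence running $m$ searches \emph{jointly}. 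This grouping is what gives a well-defined joint Hilbert space $X^m$ in which the ``typical input'' set $\Upsilon_\beta(m,X)$ lives; your per-pair ``sub-isometry $\tilde U_{uv}$'' is a local object and cannot encode a congestion constraint that is a joint property of all pairs' registers at once.

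\textbf{The promise does not control the congestion at the Grover endpoint.} You argue that because $\Gamma(u,v)\le 90\log n$, ``only a modest amplitude sits on witness buckets.'' But the whole point of Grover is that by the end the amplitude is essentially entirely on witness buckets; what controls congestion there is how many \emph{pairs} share a witness bucket $\ww$, i.e.\ $|\Delta(\uu,\vv;\ww)|$, which the per-pair promise does not bound. The paper addresses exactly this with Algorithm~\textsc{IdentifyClass} and the classification $\{\Tt_\alpha\}$: Proposition~\ref{prop:identify-class} estimates $|\Delta(\uu,\vv;\ww)|$, Lemma~\ref{lemma:ub} shows there are few heavy triples (using the promise summed over $\Pp(\uu,\vv)$), and the heavy triples get extra bandwidth by duplicating the oracle over nodes $(\uu,\vv,\ww,y)$. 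Without this layer, the assumption $A_1^1\times\cdots\times A_m^1\subseteq\Upsilon_{\beta/2}(m,X)$ in Theorem~\ref{th:search} simply fails, because the final Grover state can be heavily concentrated on a few buckets.

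\textbf{The error bound is not strong enough.} A deterministic/random schedule with ``expected amplitude truncated $\tilde O(1/n)$ per iteration'' is both hard to justify (a schedule fixed independently of the state either forbids querying the good bucket in some rounds, breaking amplification, or does not actually cap the load) and too weak: over $\tilde O(n^{1/4})$ iterations it gives only $\tilde O(n^{-3/4})$ expected lost amplitude per pair, and a union bound over $\Theta(n^2)$ pairs does not close. The paper's Lemma~\ref{lemma:state} instead shows that any state in the Grover-invariant subspace $\Hh_m$ has projection onto atypical inputs of squared norm $\le |X|\exp(-\Omega(m/|X|))$, which is exponentially small because $m=\tilde\Theta(n)\gg|X|=\sqrt n$; this, combined with $A_1^1\times\cdots\times A_m^1\subseteq\Upsilon_{\beta/2}(m,X)$ (Lemma~\ref{lemma:cond2}), is what drives the telescoping bound in the proof of Theorem~\ref{th:search}. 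These are the ideas your sketch would need to make precise, and they are not routine: they are the main technical content of Section~\ref{sec:alg} and the appendix.
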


\paragraph{From distance products to negative triangles.}\label{sub:APSP2}
We first recall the definition of the distance product of two matrices.
\begin{definition}
Let $A$ and $B$ be two $n\times n$ matrices with entries in $\Int\cup\{-\infty,\infty\}$. The distance product of $A$ and $B$, denoted $A\star B$, is the $n\times n$ matrix $C$ such that 
$
C[i,j]=\min_{k\in[n]}\{A[i,k]+B[k,j]\}
$
for all $(i,j)\in[n]\times[n]$.
\end{definition}

Vassilevska Williams and Williams \cite{Williams+JACM18} proved a reduction from the computation of the distance product of two $n\times n$ matrices $A$ and $B$ to computing the edges involved in negative triangles in a graph. We state this reduction in the following proposition.

\begin{proposition}{\cite{Williams+JACM18}}\label{prop:distance-to-triangles}
Assume that there exists a $T(n)$-round algorithm for $\NT$. Then there exists a $O(T(n)\log M)$-round algorithm that computes the distance product of any two $n\times n$ matrices with entries in $\{-M,\ldots,M\}\cup\{-\infty,\infty\}$.
\end{proposition}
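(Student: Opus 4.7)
The plan is to reduce distance-product computation to $O(\log M)$ parallel binary searches, each resolved by a single call to $\NT$ on an auxiliary tripartite graph. Construct $G'$ on three disjoint copies $I, K, J$ of $[n]$, inserting an edge $\{i,k\}$ of weight $A[i,k]$ whenever $A[i,k]$ is finite, an edge $\{k,j\}$ of weight $B[k,j]$ whenever $B[k,j]$ is finite, and an edge $\{i,j\}$ of weight $f(i,j)$ for every pair, where $f(i,j)$ is set adaptively. Because $G'$ has no intra-part edges, every triangle has the form $\{i,k,j\}$ with $i\in I$, $k\in K$, $j\in J$, and it is negative iff $A[i,k]+B[k,j]+f(i,j)<0$. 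Consequently the edge $\{i,j\}$ lies in some negative triangle iff $C[i,j]<-f(i,j)$, so a single invocation of $\NT$ simultaneously answers the threshold query ``is $C[i,j]<t_{ij}$?'' for every pair $(i,j)$, with individually chosen $t_{ij}:=-f(i,j)$.

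Each finite entry of $C$ lies in $\{-2M,\ldots,2M\}$. Maintain for every pair $(i,j)$ an interval $[\ell_{ij},h_{ij}]$ initialised to $[-2M,2M]$ and guaranteed to contain $C[i,j]$. At each iteration, let $t_{ij}$ be the midpoint of this interval, call $\NT$ on $G'$ with $f(i,j)=-t_{ij}$, and, according to whether $\{i,j\}$ is reported as lying in a negative triangle, set $h_{ij}\gets t_{ij}-1$ or $\ell_{ij}\gets t_{ij}$. After $O(\log M)$ iterations the interval collapses to a single integer, pinpointing $C[i,j]$. The $\pm\infty$ cases are detected by one or two additional calls at extremal thresholds; for instance, $C[i,j]=\infty$ iff $\{i,j\}$ is reported in no negative triangle when $f(i,j)=-(2M+1)$, since any finite $A[i,k]+B[k,j]\le 2M$ would then automatically produce a negative triangle.

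Finally, $G'$ has $3n$ vertices whereas the network has only $n$ nodes, so the larger graph must be simulated. Assign three virtual vertices (one in each of $I, K, J$) to every network node; the node already holds its own adjacency information and can acquire the remaining data needed to play these three roles. One round of the $3n$-vertex protocol involves each network node sending and receiving only $O(n)$ messages, hence can be carried out in $O(1)$ actual CONGEST-CLIQUE rounds by Lemma~\ref{lemma:Dolev}. Each call to $\NT$ on $G'$ therefore costs $O(T(n))$ rounds and the overall complexity is $O(T(n)\log M)$. I expect the main point requiring care to be the bookkeeping of the $n^2$ parallel binary searches and the clean handling of $\pm\infty$ entries; the irrelevant output of $\NT$, such as reported $I$-$K$ or $K$-$J$ edges, is simply discarded since only $I$-$J$ reports drive the binary search updates.
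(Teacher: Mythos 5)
Your proposal is correct and follows essentially the same route as the paper's (sketched) proof: the tripartite graph with $I$--$K$ edges weighted by $A$, $K$--$J$ edges weighted by $B$, and $I$--$J$ edges weighted by the negated adaptive thresholds, so that one call to $\NT$ answers all $n^2$ threshold queries ``$C[i,j]<t_{ij}$?'' at once, followed by entrywise binary search over $O(\log M)$ rounds of calls. The extra details you supply (interval bookkeeping, $\pm\infty$ handling, simulating the $3n$-vertex graph on $n$ nodes via Lemma~\ref{lemma:Dolev}) are consistent with what the paper defers to \cite{Williams+JACM18}.
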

\begin{proof}[Sketch of the proof]
Let $D$ be an arbitrary symmetric $n\times n$ matrix with integer entries.
Consider the undirected tripartite graph $G=(\Ii\cup \Jj\cup \Kk,E,f)$ with $|\Ii|=|\Jj|=|\Kk|=n$ and weight function $f(i,k)=A[i,k]$ for all $(i,k)\in \Ii\times \Kk$, $f(j,k)=A[k,j]$ for all $(j,k)\in \Jj\times \Kk$ and $f(i,j)=-D[i,j]$ for all $(i,j)\in \Ii\times \Jj$. Observe that a triple $\{i,j,k\}$ with $i\in \Ii$, $j\in \Jj$ and $k\in \Kk$ is a negative triangle if and only if 
\[
A[i,k]+B[k,j]<D[i,j],
\]
which implies that the pair $\{i,j\}$ is involved in a negative triangle of $G$ if and only if 
\begin{equation}
\min_{k\in[n]}\{A[i,k]+B[k,j]\}<D[i,j].
\end{equation}
Thus by finding all the pairs $\{i,j\}$ involved in a negative triangle, we learn for which pairs $\{i,j\}$ the above inequality holds. By starting with the all-zero matrix $D$ and doing binary search (adjusting each time each entry of the matrix $D$), the distance product can thus be computed by calling $O(\log M)$ times an algorithm for $\NT$. More details can be found in \cite{Williams+JACM18}. 
\end{proof}

\paragraph{From APSP to distance products and proof of Theorem \ref{th:APSP}.}\label{sub:APSP3}
We now recall how the APSP reduces to the computation of the distance product.\footnote{Our explanations focus on computing the lengths of the shortest paths. Using standard techniques (see for instance \cite{CKKLPS15}), the approach can be adapted to return the shortest paths as well, at a cost of increasing the complexity only by a polylogarithmic factor.} This is a standard reduction: we refer to, e.g., \cite{ZwickJACM02} for a reference in the centralized setting and to~\cite{CKKLPS15} for a discussion of the reduction in the CONGEST-CLIQUE model. 

Let $G=(V,E,w)$ be a weighted directed graph on $n$ vertices with no self-loop. Assume that the graph has no negative cycle.
Let us associate $V$ with the set $[n]$. The graph can be encoded as an $n\times n$ matrix $A_G$ in which 
\[
A_G[i,j]=\left\{
\begin{tabular}{ll}
0&if $i=j$,\\
$w(i,j)$& if $i\neq j$ and $(i,j)\in E$,\\
$\infty$& if  $i\neq j$ and $(i,j)\notin E$,
\end{tabular}
\right.
\]
for each $(i,j)\in [n]\times [n]$. It is easy to check that the matrix $A_G^n$, the $n$-th power of the matrix~$A_G$ with respect to the distance product, contains the distances between all pairs of vertices of $G$. Moreover, this matrix can be computed using only $O(\log n)$ matrix products. If the weights of the graph are integers in $\{-W,\ldots,W\}$, then all the finite entries of the matrices arising during the computation of $A_G^n$ are between $-nW$ and $nW$.
We summarize this result in the following proposition.

\begin{proposition}\label{prop:APSP-to-distance}
Assume that there exists a $T(n,M)$-round algorithm that computes the distance product of any two $n\times n$ matrices with entries in $\{-M,\ldots,M\}\cup\{-\infty,\infty\}$. Then there exists a $O(T(n,nW)\log n)$-round algorithm for the APSP with integer weights in $\{-W,\ldots,W\}$.
\end{proposition}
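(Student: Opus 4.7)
The plan is a classical repeated-squaring reduction: reduce APSP to $O(\log n)$ distance product computations, each of which runs in $T(n,nW)$ rounds. First, each node $u$ constructs the $u$-th row of $A_G$ locally from its input $\Nn_G(u)$ together with the associated edge weights; this requires no communication and leaves $A_G$ distributed in the row-per-node format expected by the distance product subroutine.

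Next I would iteratively compute $A_G^{2}, A_G^{4}, \ldots, A_G^{2^{\lceil\log n\rceil}}$ by $\lceil\log n\rceil$ successive squarings, each realised by a single call to the assumed distance product algorithm. After the $k$-th call, node $i$ holds the $i$-th row of $A_G^{2^{k}}$, which serves directly as both the left and right input for the next squaring. The standard correctness argument uses the fact that $A_G[i,i]=0$ and $G$ has no negative cycle: the $(i,j)$ entry of $A_G^{m}$ then equals the length of the shortest walk from $i$ to $j$ of length at most $m$, which for $m\ge n$ coincides with the shortest-path distance. Each node finally outputs its row of $A_G^{2^{\lceil\log n\rceil}}$.

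The one point that genuinely needs checking is that the entries handed to the subroutine stay in $\{-nW,\ldots,nW\}\cup\{-\infty,\infty\}$ throughout. A finite entry of any intermediate matrix is the weight of some walk in $G$, and because $G$ has no negative cycle, a minimum-weight walk between two vertices can be chosen to be simple and therefore has weight in $[-(n-1)W,(n-1)W]\subseteq[-nW,nW]$. Hence every iteration costs $T(n,nW)$ rounds, and the total is $O(T(n,nW)\log n)$. I do not expect any real obstacle: the main potential pitfall would be unbounded growth of entries across iterations, but the no-negative-cycle assumption rules this out, and the handling of $\pm\infty$ is already built into the subroutine's interface. Chaining this proposition with Proposition~\ref{prop:distance-to-triangles}, Proposition~\ref{prop:promise-to-nopromise}, and Theorem~\ref{th:main} then yields Theorem~\ref{th:APSP}.
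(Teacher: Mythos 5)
Your proposal matches the paper's own argument: the paper proves this proposition by exactly the same repeated-squaring reduction, computing $A_G^n$ via $O(\log n)$ distance products and noting that all finite intermediate entries stay in $\{-nW,\ldots,nW\}$ under the no-negative-cycle assumption. Your additional remarks on the row-per-node distribution and on why entries do not grow are correct and consistent with the paper's (terser) presentation.
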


Theorem \ref{th:APSP} then follows from the reductions described in Propositions \ref{prop:promise-to-nopromise}, \ref{prop:distance-to-triangles}, \ref{prop:APSP-to-distance} and from Theorem~\ref{th:main}. The success probability of the final quantum algorithm is $1-\tilde O((\log W)/n)$, i.e., with probability $1-\tilde O((\log W)/n)$ all the nodes of the network output the correct answer.

\section{Distributed multiple quantum searches}\label{sec:quantum}
In this section we describe our quantum technique: distributed multiple quantum searches only using typical inputs.
\subsection{Distributed quantum search}\label{sub:LGM}
Here, we explain the basic framework for quantum distributed search developed in \cite{LeGall+PODC18}. \vspace{2mm}

\noindent
{\bf Description of the result.}
Let $X$ be a finite set and $g\colon X\to \{0,1\}$ be a Boolean function over~$X$. Let $u$ be an arbitrary node of the network (e.g., an elected leader). Assume that node~$u$ can evaluate the function $g$ in~$r$ rounds: assume that there exists an $r$-round classical distributed algorithm $\Cc$ such that node $u$, when receiving as input $x\in X$, outputs $g(x)$.  Now consider the following problem: node $u$ should find one element $x\in X$ such that $g(x)=1$ (or decide that no such element exists). The trivial strategy is to compute $g(x)$ for each $x\in X$ one by one, which requires $r|X|$ rounds. Le Gall and Magniez \cite{LeGall+PODC18} showed that there exists a quantum distributed algorithm that solves this problem with high probability in $\tilde O(r\sqrt{|X|})$ rounds. While this result is described in \cite{LeGall+PODC18} for the CONGEST model, it holds for the CONGEST-CLIQUE model as well.\vspace{2mm}

\noindent 
{\bf Example.}
Let us show how the quantum distributed algorithm for the diameter from \cite{LeGall+PODC18} can be described in this setting. 
Let $X=V$ be the vertex set of the graph considered. Fix an integer $d$ and define the function $g\colon V\to \{0,1\}$ as follows: for any vertex $v\in V$ we have $g(v)=1$ if and only if the eccentricity of vertex $v$ is larger than $d$. Solving the problem described in the previous paragraph enables us to decide whether the maximum eccentricity of a vertex (i.e., the diameter of the graph) is larger than $d$, and repeating this process a logarithmic number of times for different values of $d$ (chosen via binary search) enables us to compute the diameter.\footnote{Ref.~\cite{LeGall+PODC18} then observed that in the CONGEST model the function $g$ can be evaluated by computing the eccentricity of the vertex $v$ and sending the information to the node $u$, which can be done in $O(D)$ rounds, where~$D$ denotes the diameter of the graph. Thus the diameter can be computed in $\tilde O(\sqrt{n}D)$ rounds. With a few additional improvements it is possible to obtain the better bound $\tilde O(\sqrt{nD})$, see \cite{LeGall+PODC18}.}\vspace{2mm}

\noindent 
{\bf Technical details.}
The quantum distributed algorithm for search is obtained by implementing Grover's well-known quantum search algorithm \cite{GroverSTOC96} in the distributed setting. We now explain how this works.

Let us define the two sets $A^0=\{x\in X\:|\:g(x)=0\}$ and $A^1=X\setminus A^0=\{x\in X\:|\:g(x)=1\}$ and assume that $|A^1|>0$. As usual when analyzing Grover's algorithm, we make the convenient assumption $|A^1|<|X|/2$ (otherwise finding a solution is easy).
Define the following two quantum states:
$
\ket{\psi^0}=\frac{1}{\sqrt{|A^0|}}\sum_{x\in A^0}\ket{x}
$
and
$
\ket{\psi^1}=\frac{1}{\sqrt{|A^1|}}\sum_{x\in A^1}\ket{x},
$
which are the uniform superpositions over all the elements in $A^0$ and $A^1$, respectively. Let $\Hh$ denote the subspace generated by these two quantum states. Grover's algorithm starts from the quantum state $\ket{\Phi_0}=\frac{1}{\sqrt{|X|}}\sum_{x\in X}\ket{x}$ corresponding to the uniform superposition over all the elements in $X$.
Note that $\ket{\Phi_0}$ belongs to $\Hh$ and is easy to create.
Grover's algorithm then successively applies the unitary operator corresponding to $\Cc$ and then a unitary operator $U$ independent of the function~$g$.\footnote{Here the term ``unitary operator corresponding to $\Cc$" means the unitary operator corresponding to the quantum circuit obtained by converting the classical algorithm~$\Cc$ into a quantum circuit. We typically denote this unitary operator by the same symbol $\Cc$, since there is no risk of confusion. The key observation of~\cite{LeGall+PODC18} is that this conversion preserves the complexity: if $\Cc$ is an $r$-round classical algorithm then the corresponding unitary operator can be implemented in~$O(r)$ rounds.} 

A crucial property is that any state in $\Hh$ is mapped by $U\Cc$ to a state in $\Hh$, which means that for any $k\ge 0$  the state of the system after the $k$-th iteration can be written as  
$
\ket{\Phi_k}=
(U\Cc)^k\ket{\Phi_0}=\alpha_k\ket{\psi^0}+\beta_k\ket{\psi^1}
$
for some complex numbers $\alpha_k$ and $\beta_k$ such that $|\alpha_k|^2+|\beta_k|^2=1$. The analysis of Grover's algorithm shows that by choosing $k$ such that $k=O(\sqrt{|X|})$ we can guarantee that $|\beta_k|^2\approx 1$, which means that measuring the state $\ket{\Phi_k}$ gives an element $x\in A^1$ with high probability. 
 The total round complexity is thus $\tilde O(r\sqrt{|X|})$. \vspace{2mm}

\noindent 
{\bf Multiple searches.}
We now describe an easy generalization to multiple searches of the framework presented above.
Let $X$ be a finite set and $g_1,\ldots,g_m\colon X\to \{0,1\}$ be $m$ Boolean functions over~$X$, for some integer $m\ge 1$. For each $i\in[m]$ define the set $A^1_i=\{x_i\in X\:|\:g_i(x_i)=1\}$ and assume for convenience that $|A_i^1|>0$.\footnote{This can easily be enforced by adding dummy solutions. Even if a dummy solution is introduced, if there is a real solution then Grover's algorithm will output it after a few repetitions (since when there are more than one solution Grover's algorithm outputs a random solution).}  Let $u$ be an arbitrary node of the network. Assume now that node~$u$ can evaluate the functions $g_1,\ldots,g_m$ simultaneously in~$r$ rounds. More precisely, assume that there exists a $r$-round classical distributed algorithm $\Cc_m$ such that node $u$, when receiving as input $(x_1,\ldots,x_m)\in X^m$, outputs $(g_1(x_1),\ldots,g_m(x_m))$. Now consider the following problem: node $u$ should find one element in $A_1^1\times\cdots\times A_m^1$. The framework described above can easily be generalized to this problem by having node $u$ implement in parallel $m$ independent distributed quantum searches. This gives a quantum algorithm solving this problem with high probability in $\tilde O(r\sqrt{|X|})$ rounds.\vspace{2mm}

\subsection{Multiple searches only using typical inputs}

We now show a stronger result for the multiple searches problem introduced in the previous subsection: we construct a quantum algorithm that solves the problem even if the evaluation procedure is correct only on inputs close to typical inputs. The motivation for this assumption on the evaluation procedure is as follows. In a typical application (e.g., the example in Section \ref{sub:LGM}), the search domain~$X$ represents nodes of the network and evaluation should be delegated to these nodes. In this case, at each evaluation step, the node $u$ then needs to send a query to the nodes corresponding to each coordinate of the input $\bm{x}=(x_1,\ldots,x_m) \in X^m$. Then if $\bm{x} \in X^m$ is mostly dominated by a single $x \in X$, e.g., $\bm{x}=(x,x,\ldots,x,x)$, the communication link $(u, x)$ suffers high congestion due to the queries injected by $u$. In this subsection we show that in some cases such ``non-typical'' inputs~$\bm{x}$ can be completely ignored, which solves this congestion problem.

Let us first introduce the following notation: for any real number $\beta\ge 0$, let $\Upsilon_\beta(m,X)\subseteq X^m$ denote the set of all $\bm{x} = (x_1,\ldots,x_m)\in X^m$ such that for each $x \in X$ its frequency in $\bm{x}$ is at most~$\beta$ (i.e., $x$ appears at most~$\beta$ times in $\bm{x}$). 
Note that when $\beta>(1+\delta)m/|X|$ for some large enough $\delta>0$, the set $\Upsilon_\beta(m,X)$ includes the ``typical" elements of $X^m$, i.e., the elements in which the frequencies of all $x\in X$ are close to the frequencies in an element of $X^m$ chosen uniformly at random.

Suppose that instead of assuming the existence of an $r$-round algorithm $\Cc_m$ that simultaneously evaluates the functions $g_1,\ldots,g_m$ on $X^m$, we only assume that we have an $r$-round classical distributed algorithm $\tilde \Cc_m$ in which node~$u$ outputs $(g_1(x_1),\ldots,g_m(x_m))$ on an input $(x_1,\ldots,x_m)\in\Upsilon_\beta(m,X)$ but may output an error message (or an arbitrary output) on an input $(x_1,\ldots,x_m)\in X^m\setminus \Upsilon_\beta(m,X)$. 
Our key result is the following theorem, which shows that the same complexity as in Section \ref{sub:LGM} can be achieved in case $\beta$ is large enough so that $\tilde \Cc_m$ works correctly both on typical elements from $X^m$ and on the solutions of the search problem.

\begin{theorem}\label{th:search}
Assume that $|X|<m/(36\log m)$.
Assume the existence of an evaluation algorithm~$\tilde \Cc_m$, as just described, for some real number $\beta$ such that $\beta>8m/|X|$.
Finally, assume that
\[
A_1^1\times\cdots\times A_m^1\subseteq \Upsilon_{\beta/2}(m,X).
\]
 There exists a $\tilde O(r\sqrt{|X|})$-round quantum algorithm that outputs an element of $A_1^1\times\cdots\times A_m^1$ with probability at least $1-2/m^2$.
\end{theorem}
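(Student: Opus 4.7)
The plan is a hybrid argument comparing the real algorithm, which uses $\tilde{\Cc}_m$, against an ideal algorithm that uses a hypothetical oracle $\Cc_m^\ast$ coinciding with $\tilde{\Cc}_m$ on $\Upsilon_\beta(m,X)$ but returning the correct values $(g_1(x_1),\ldots,g_m(x_m))$ on every $\bm{x}\in X^m$. The multiple-search framework of Section~\ref{sub:LGM}, applied with $\Cc_m^\ast$ and $T=\tilde O(\sqrt{|X|})$ iterations (with standard amplification), produces an element of $A_1^1\times\cdots\times A_m^1$ in $\tilde O(r\sqrt{|X|})$ rounds with probability at least $1-1/m^3$. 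It therefore suffices to show that swapping $\Cc_m^\ast$ for $\tilde{\Cc}_m$ perturbs the final quantum state by $o(1/m^2)$ in $\ell_2$-norm; the advertised $1-2/m^2$ success guarantee then follows because $\ell_2$-distance of pure states upper-bounds total variation distance of measurement outcomes.

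By the standard hybrid lemma, the error introduced at the $k$-th oracle call is bounded by $2\|\Pi\ket{\Phi_k}\|$, where $\Pi$ projects onto the span of basis vectors $\ket{\bm{x}}$ with $\bm{x}\notin\Upsilon_\beta(m,X)$ and $\ket{\Phi_k}$ is the ideal state immediately before that call. The crucial observation is that $\ket{\Phi_k}=\bigotimes_{i=1}^m\ket{\phi_k^{(i)}}$ is a tensor product of 2D Grover states $\alpha_k^{(i)}\ket{\psi_i^0}+\beta_k^{(i)}\ket{\psi_i^1}$, so measuring it yields $\bm{x}=(x_1,\ldots,x_m)$ from a \emph{product} distribution on $X^m$: in coordinate $i$ each $x\in A_i^0$ has probability $|\alpha_k^{(i)}|^2/|A_i^0|$ and each $x\in A_i^1$ has probability $|\beta_k^{(i)}|^2/|A_i^1|$. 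Under the convenient $|A_i^1|<|X|/2$ assumption from Section~\ref{sub:LGM}, the $A_i^0$-contribution to the expected frequency of any fixed $x\in X$ across the $m$ coordinates is at most $\sum_i 1/|A_i^0|\le 2m/|X|$.

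The combinatorial heart of the proof is bounding the $A_i^1$-contribution. The promise $A_1^1\times\cdots\times A_m^1\subseteq\Upsilon_{\beta/2}(m,X)$ immediately forces $|\{i:x\in A_i^1\}|\le\beta/2$ for every $x\in X$: the tuple obtained by picking $x$ from $A_i^1$ whenever possible (and an arbitrary element of $A_i^1$ otherwise) lies in the product set and must therefore be typical. Since $|\beta_k^{(i)}|^2/|A_i^1|\le 1$, this contribution to $\E[\mathrm{freq}(x)]$ is at most $\beta/2$, so combined with the $A_i^0$-bound and the hypothesis $\beta>8m/|X|$ we obtain $\E[\mathrm{freq}(x)]\le\beta/2+2m/|X|\le 3\beta/4$. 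A multiplicative Chernoff bound applied to the independent indicators $\mathbf{1}[x_i=x]$ yields $\Pr[\mathrm{freq}(x)>\beta]\le e^{-\Omega(\beta)}$, and a union bound over $x\in X$ together with $\beta>8m/|X|>288\log m$ (coming from $|X|<m/(36\log m)$) gives $\|\Pi\ket{\Phi_k}\|^2=\Pr[\bm{x}\notin\Upsilon_\beta(m,X)]\le m^{-c}$ for an arbitrarily large constant $c$.

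Summing over the $T=\tilde O(\sqrt{|X|})=\tilde O(\sqrt{m})$ iterations, the total state perturbation is $\tilde O(\sqrt{m})\cdot m^{-c/2}=o(1/m^2)$, closing the argument. The step I expect to be the main obstacle, and the reason the hypothesis is phrased in exactly this form, is controlling the $A_i^1$-contribution: it is precisely the product-set inclusion $A_1^1\times\cdots\times A_m^1\subseteq\Upsilon_{\beta/2}(m,X)$ that yields the per-element cap $|\{i:x\in A_i^1\}|\le\beta/2$, without which a single ``popular'' element could accumulate expected frequency far exceeding $\beta$, making Chernoff concentration collapse and the whole hybrid bound useless.
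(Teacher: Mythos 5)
Your proof is correct and follows essentially the same route as the paper's: a hybrid argument bounding the per-iteration perturbation by $2\|\Pi\ket{\Phi_k}\|$, with the projection mass controlled by exactly the paper's key computation (the $\beta/2$ cap on the solution coordinates extracted from the product-set inclusion, the $2m/|X|$ bound on the $A^0$-contributions via $|A_i^0|\ge |X|/2$, a Chernoff bound, and a union bound over $X$). The only difference is cosmetic: you bound the projection of the specific tensor-product Grover state directly through its product measurement distribution, whereas the paper proves the same bound for every state in the invariant subspace $\Hh_m$ by decomposing into the orthogonal basis $\{\ket{\psi^b}\}$ and conditioning on the $A^1$-coordinates.
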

The proof of Theorem \ref{th:search} can be found in the appendix.
The basic idea behind the proof is fairly easy to describe. Let $\Qq$ denote the $\tilde O(r\sqrt{|X|})$-round quantum algorithm described in Section~\ref{sub:LGM}, which uses Algorithm~$\Cc_m$ (or more precisely, the quantum operator corresponding to $\Cc_m$). The initial state of this algorithm is the uniform superposition over all elements of $X^m$; since $\beta$ is large enough most of these elements are in $\Upsilon_\beta(m,X)$. The final state is close to the uniform superposition over all elements of $A_1^1\times\cdots\times A_m^1$; all these elements are also in $\Upsilon_\beta(m,X)$ from the assumption. The algorithm of Theorem \ref{th:search} is exactly the same as $\Qq$ but uses (the quantum operator corresponding to) $\tilde \Cc_m$ instead of (the quantum operator corresponding to) $\Cc_m$. Using $\tilde \Cc_m$ instead of~$\Cc_m$ obviously only has a negligible impact at the beginning of computation and at the end of the computation. The main technical difficulty is to show that this has no significant impact at each step of the computation as well. We show this by proving that at any step of the computation the quantum state of the system is close to its projection on the vector space spanned by the basis vectors that are in $\Upsilon_\beta(m,X)$.

\section{Detecting Negative Triangles}\label{sec:alg}
In this section we present a $\tilde O(n^{1/4})$-round quantum distributed algorithm that solves the problem $\NTT$, which proves Theorem \ref{th:main}. Through the section $G=(V,E,f)$ represents the input of $\NTT$, i.e., an undirected weighted graph on $n$ vertices that satisfies the promise $\Gamma(u,v)\le 90 \log n$ for all $\{u,v\}\in S$.

\subsection{Overall description of the algorithm}\label{sub:overall}

The description of our algorithm will use two partitions of the vertex set $V$, which we now introduce. For the ease of presentation we assume that the three numbers $n^{1/4}$, $\sqrt{n}$ and $n^{3/4}$ are integers (otherwise we can simply round them to the next integers and slightly adjust the sizes of the sets). The first partition is an arbitrary partition of $V$ into $n^{1/4}$ subsets each containing $n^{3/4}$ elements. We denote~$\Vv$ the collection of subsets making this partition. The second partition is an arbitrary partition of $V$ into $\sqrt{n}$ subsets each containing $\sqrt{n}$ elements. We denote $\Vv'$ the collection of subsets making this partition. In addition to the labeling scheme described in Section \ref{sec:prelim}, which labels the nodes of the network by elements of~$V$, our algorithm will also use the following two labeling schemes.

\paragraph{Second labeling scheme.}
Let us write $\Tt=\Vv\times \Vv\times \Vv'$ and observe that $|\Tt|=n$. We assign one distinct label from~$\Tt$ to each node of the network. We will simply write ``node $(\uu,\vv,\ww)$'' to refer to the node with label $(\uu,\vv,\ww)$, for each triple $(\uu,\vv,\ww)\in\Tt$. This labeling scheme will be used by the algorithm to decide which node should gather the information of the graph: node $(\uu,\vv,\ww)$ will gather the weights of all the edges $\{u,w\}\in \Pp(\uu, \ww)$ and $\{w,v\}\in \Pp(\ww,\vv)$ of the graph.

\paragraph{Third labeling scheme.}
For each $(\uu,\vv)\in \Vv\times \Vv$ we have $|\Pp(\uu,\vv)|=\Theta(n^{3/2})$. We will describe below a procedure that partitions the set $\Pp(\uu,\vv)$ into $\sqrt{n}$ sets each of size $\tilde\Theta(n)$. These sets will be denoted $\Lambda_x(\uu,\vv)$, for $x\in[\sqrt{n}]$.
Our third labeling scheme assigns one distinct label $(\uu,\vv,x)\in \Vv\times \Vv\times [\sqrt{n}]$ to each node of the network. Again, we will simply write ``node $(\uu,\vv,x)$'' to refer to the node with label $(\uu,\vv,x)$, for each triple $(\uu,\vv,x)\in\Vv\times \Vv\times [\sqrt{n}]$. This labeling scheme will be used by the algorithm to distribute to search for triangles: node $(\uu,\vv,x)$ will be in charge of checking the existence of all the triangles involving one edge in the set $\Lambda_x(\uu,\vv)$.

\paragraph{The partition procedure.}
We now describe how to construct the sets $\Lambda_x(\uu,\vv)$. For technical reasons it will be much more convenient to use a covering instead of a partition of $\Pp(\uu,\vv)$, i.e., to allow some elements to appear more than once, and to construct the covering randomly rather than deterministically.

Consider the following process. Each node $(\uu,\vv,x)\in \Vv\times \Vv\times [\sqrt{n}]$ constructs the set $\Lambda_x(\uu,\vv)\subseteq \Pp(\uu,\vv)$ as follows: starting with the empty set, each pair $\{u,v\}\in \Pp(\uu,\vv)$ is added by the node to its set $\Lambda_x(\uu,\vv)$ with probability $10\log n/\sqrt{n}$. We say that the set $\Lambda_x(\uu,\vv)$ is $\emph{well-balanced}$ if the inequality
$
\big|\{v\in \vv \:|\: \{u,v\}\in\Lambda_x(\uu,\vv)\}\big|\le 100\cdot n^{1/4}\log n
$ 
holds for all $u\in \uu$. The following lemma, which is proved by standard probabilistic arguments, shows that with high probability the sets created by this process are well-balanced and cover all the set $\Pp(\uu,\vv)$.

\begin{lemma}\label{lemma:cond1}
With probability at least $1-2/n$ the following statements hold 
for all $(\uu,\vv)\in \Vv\times \Vv$:
\begin{itemize}
\item[(i)]
$\Lambda_x(\uu,\vv)$ is well-balanced for each $x\in[\sqrt{n}]$;
\item[(ii)]
$\bigcup_{x\in[\sqrt{n}]}\Lambda_x(\uu,\vv)=\Pp(\uu,\vv)$.
\end{itemize}
\end{lemma}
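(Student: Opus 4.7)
The plan is to prove each of the two statements separately and then combine them by a union bound. For both parts the key tool will be standard Chernoff-type concentration inequalities, applied to the independent Bernoulli random variables that decide the membership of each pair in each set $\Lambda_x(\uu,\vv)$. The main thing to watch is that the union bounds stay comfortably below $1/n$ even after ranging over the roughly $n^{7/4}$ triples $(\uu,\vv,x)$ and $u \in \uu$ that appear in part (i).

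For part (i), I would fix $(\uu,\vv)\in\Vv\times\Vv$, an index $x\in[\sqrt{n}]$, and a vertex $u\in\uu$. The random variable
\[
X_{u,x} \;=\; \big|\{v\in\vv \::\: \{u,v\}\in \Lambda_x(\uu,\vv)\}\big|
\]
is a sum of at most $|\vv|=n^{3/4}$ independent Bernoulli$(10\log n/\sqrt{n})$ variables, so
$\E[X_{u,x}]\le 10\,n^{1/4}\log n$. Using the Chernoff bound in the form $\Pr[X\ge(1+\delta)\mu]\le (e^\delta/(1+\delta)^{1+\delta})^\mu$ with $\delta=9$ and $\mu\le 10\,n^{1/4}\log n$, I get
\[
\Pr\!\big[X_{u,x}\ge 100\,n^{1/4}\log n\big]\;\le\;\left(\frac{e^9}{10^{10}}\right)^{10 n^{1/4}\log n}\!\!\le\; n^{-\Omega(n^{1/4})}.
\]
Taking a union bound over the $|\Vv|^2\cdot\sqrt{n}\cdot n^{3/4} = n^{7/4}$ possible choices of $(\uu,\vv,x,u)$ still leaves a failure probability of at most $n^{-\Omega(n^{1/4})}\ll 1/n$, so part (i) holds with probability at least $1-1/n$.

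For part (ii), I would fix any pair $\{u,v\}\in\Pp(\uu,\vv)$. Since the $\sqrt{n}$ decisions of whether $\{u,v\}$ lies in $\Lambda_x(\uu,\vv)$ are independent, the probability that $\{u,v\}$ is missed by every $\Lambda_x(\uu,\vv)$ is
\[
\left(1-\frac{10\log n}{\sqrt{n}}\right)^{\!\sqrt{n}} \;\le\; e^{-10\log n}\;=\;n^{-10}.
\]
There are at most $n^2$ pairs in $\bigcup_{(\uu,\vv)}\Pp(\uu,\vv)$, so a union bound gives failure probability at most $n^{-8}$, and part (ii) holds with probability at least $1-1/n$.

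A final union bound over the failure events of (i) and (ii) yields the claimed bound of $2/n$. There is really no technical obstacle here; the only thing to be careful about is choosing the Chernoff form whose constants produce an exponent of $n^{1/4}\log n$ (so that the loose $n^{7/4}$ union bound over $u$, $x$, $\uu$, $\vv$ is absorbed), which the factor $100$ in the definition of ``well-balanced'' is precisely calibrated to give.
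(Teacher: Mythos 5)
Your proof is correct and follows essentially the same route as the paper's: a Chernoff bound on the per-vertex degree within each $\Lambda_x(\uu,\vv)$ for (i), a direct computation of $(1-10\log n/\sqrt{n})^{\sqrt{n}}$ for (ii), and union bounds over all choices. You are merely more explicit about the Chernoff constants and the counting, which the paper leaves implicit.
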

\begin{proof}
Let us fix $(\uu,\vv)\in \Vv\times\Vv$. 

For any $x\in[\sqrt{n}]$ we have
\[
\E\left[\big|\{v\in \vv \:|\: \{u,v\}\in\Lambda_x(\uu,\vv)\}\big|\right]=10n^{1/4}\log n
\]
for each $u\in \uu$.
Chernoff's bound and the union bound imply that Condition (i) of the lemma thus holds with probability at least $1-1/n^2$.

Let $\{u,v\}$ be an arbitrary pair in $\Pp(\uu,\vv)$. For any $x\in[\sqrt{n}]$, this pair is included in $\Lambda_x(\uu,\vv)$ with probability $10\log n/\sqrt{n}$. The probability that this pair is not included in any $\Lambda_x(\uu,\vv)$ is thus
\[
\left(1-\frac{10 \log n}{\sqrt{n}}\right)^{\sqrt{n}}\le 1/n^{4}.
\]
Condition (ii) of the lemma thus holds with probability at least $1-1/n^2$.

The statement of the lemma then follows from the above analyses and the union bound. 
\end{proof}

\paragraph{Description and analysis of the algorithm.}
Our algorithm is called \textsc{ComputePairs} and described in Figure \ref{fig:algorithm3}. Let us analyze it step by step.
Step 1 requires $O(n^{1/4})$ rounds, since $|\Pp(\uu, \ww)|=|\Pp(\ww,\vv)|=O(n^{5/4})$ hold.
Step 2 performs the sampling described in Section \ref{sub:overall}, checks which sampled pairs are in $S$ and loads their weight. Step 2 can be implemented in $O(\log n)$ rounds, since communication occurs only when all the sets $\Lambda_x(\uu,\vv)$ are well-balanced.
Lemma \ref{lemma:cond1} implies that with probability at least $1-2/n$ the following two statements hold:
\begin{itemize}
\item[(a)]
Algorithm \textsc{ComputePairs} does not abort at Step 2;
\item[(b)] 
at the end of Step 2, each pair $\{u,v\}\in S$ appears at least once at some node.
\end{itemize}
Note that when the algorithm does not abort, at the end of Step 2 each node $k=(\uu,\vv,x)$ keeps at most $100 n\log n$ pairs (since the sets $\Lambda_x(\uu,\vv)$ are all well-balanced). While the exact number of remaining pairs may naturally depend on the node, in order to simplify the notation we will assume that each node keeps precisely $m=100 n\log n$ pairs.

\begin{figure}[ht!]
\begin{center}
\fbox{
\begin{minipage}{15 cm} 
\begin{itemize}
\item[1.]
Each node $(\uu,\vv,\ww)\in\Tt$ loads all the weights $f(u,w)$ and $f(w,v)$ for all $\{u,w\}\in \Pp(\uu, \ww)$ and all $\{w,v\}\in \Pp(\ww,\vv)$.
\item[2.]
Each node $k=(\uu,\vv,x)$ constructs a random set $\Lambda_x(\uu,\vv)\subseteq \Pp(\uu,\vv)$ as described in Section \ref{sub:overall}. If the set $\Lambda_x(\uu,\vv)$ is not well-balanced, then the protocol is aborted.
Otherwise, the node $(\uu,\vv,x)$ loads the weight $f(u,v)$ of all the pairs $\{u,v\}\in \Lambda_x(\uu,\vv)$ and also checks which of these pairs are in $S$. The node $(\uu,\vv,x)$ keeps only the pairs in $S$. Let $\{u^k_1,v^k_1\},\ldots,\{u^k_{m},v^k_{m}\}$ denote these pairs.
\item[3.] 
Each node $k=(\uu,\vv,x)$ executes a search to check, for each $\ell\in [m]$, if there exists some $\ww\in \Vv'$ such that there exists $w\in \ww$ for which $(u^k_\ell,v^k_\ell,w)$ is a negative triangle. The node $k$ outputs all the pairs $\{u^k_\ell,v^k_\ell\}$ for which the search is successful. 
\end{itemize}
\end{minipage}
}
\end{center}\vspace{-6mm}
\caption{Algorithm \textsc{ComputePairs}, which computes all $\{u,v\}\in S$ involved in a negative triangle.}\label{fig:algorithm3}
\end{figure}

Step 3 of Algorithm \textsc{ComputePairs} can easily be implemented in $O(\sqrt{n})$ rounds in the classical setting.
In the next subsections we prove the following statement, which shows that a quadratic speedup can be achieved in the quantum setting. 
\begin{proposition}\label{prop:quantum}
Step 3 of Algorithm \textsc{ComputePairs} can be implemented by a $\tilde O(n^{1/4})$-round quantum algorithm that succeeds with probability at least $1-O(1/n)$.
\end{proposition}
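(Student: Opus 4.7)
The plan is to cast Step 3 as an instance of the distributed multiple search problem from Section \ref{sub:LGM} and apply Theorem \ref{th:search} with $X = \Vv'$. At each node $k = (\uu, \vv, x)$ I set up $m$ parallel Grover searches whose $\ell$-th characteristic function $g_\ell \colon \Vv' \to \{0,1\}$ outputs $1$ on $\ww$ iff some $w \in \ww$ completes a negative triangle with the stored pair $\{u^k_\ell, v^k_\ell\}$. Since $|\Vv'| = \sqrt n$ and Theorem \ref{th:search} costs $\tilde O(r\sqrt{|X|})$ rounds, hitting the target $\tilde O(n^{1/4})$ reduces to producing an evaluator $\tilde \Cc_m$ that costs $r = \tilde O(1)$ rounds on typical inputs, and verifying the load-balancing hypotheses of the theorem.

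The natural evaluator works as follows. Because Step 1 has stored at node $(\uu, \vv, \ww)$ every weight $f(u,w)$ and $f(w,v)$ with $u \in \uu$, $v \in \vv$, $w \in \ww$, and because $k$ already holds $f(u^k_\ell, v^k_\ell)$ from Step 2, $\tilde \Cc_m$ routes each coordinate $(u^k_\ell, v^k_\ell)$ from $k$ to $(\uu, \vv, \ww_\ell)$, which checks whether $f(u^k_\ell, v^k_\ell) + f(u^k_\ell, w) + f(w, v^k_\ell) < 0$ holds for some $w \in \ww_\ell$ and returns one bit. For each fixed column $(\uu, \vv)$ only the $\sqrt n$ nodes $(\uu, \vv, \cdot)$ participate, so traffic of different columns uses disjoint source/destination pairs and runs in parallel. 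Setting $\beta := \lceil 16m/\sqrt n \rceil = \tilde\Theta(\sqrt n)$ meets the condition $\beta > 8m/|X|$ of Theorem \ref{th:search}, and on any input in $\Upsilon_\beta(m, \Vv')$ each source has at most $m = \tilde O(n)$ messages in total while each destination receives at most $\sqrt n \cdot \beta = \tilde O(n)$; splitting into $O(\log n)$ sub-batches of size at most $n$ per endpoint and applying Lemma \ref{lemma:Dolev} delivers each sub-batch in two rounds, so $r = \tilde O(1)$.

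To invoke Theorem \ref{th:search} I must also check $A_1^1 \times \cdots \times A_m^1 \subseteq \Upsilon_{\beta/2}(m, \Vv')$. Fix $k$ and $\ww_0 \in \Vv'$; then $|\{\ell : \ww_0 \in A_\ell^1\}|$ is at most the number of pairs $\{u,v\} \in \Lambda_x(\uu, \vv) \cap S$ whose witness set $W_{u,v}$ of triangle-completing vertices intersects $\ww_0$. The promise $\Gamma(u,v) \le 90\log n$ gives $|W_{u,v}| \le 90\log n$, so by double counting $\sum_{\ww_0 \in \Vv'}\bigl|\{\{u,v\} \in \Pp(\uu, \vv)\cap S : W_{u,v}\cap \ww_0 \neq \emptyset\}\bigr| \le 90\log n \cdot |\Pp(\uu, \vv) \cap S|$. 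Combined with the $\tilde\Theta(1/\sqrt n)$ sampling rate producing $\Lambda_x$ (and, if $\Vv'$ is not random, a preliminary uniformly random re-indexing of $\Vv'$), an expectation calculation yields $\E[|\{\ell : \ww_0 \in A_\ell^1\}|] \le \tilde O(\sqrt n)$, and a Chernoff plus union bound over all $\tilde O(n^{3/2})$ pairs $(k, \ww_0)$ controls this count by $\beta/2$ with probability $1 - O(1/n)$.

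Given the above, Theorem \ref{th:search} returns at each node $k$ a witness tuple in $\tilde O(n^{1/4})$ rounds with per-node failure $2/m^2$. Because Step 3 requires \emph{every} $\ell$ with $A_\ell^1 \neq \emptyset$, I would amplify with the standard dummy-solution trick --- add one artificial class to each $A_\ell^1$ and repeat the quantum search $O(\log n)$ times, declaring $\ell$ successful as soon as a non-dummy class is returned --- which keeps the round count at $\tilde O(n^{1/4})$ while bringing the per-search error to $1/\poly(n)$; a union bound over $n$ nodes then gives total failure $O(1/n)$. The step I expect to be the main obstacle is the solution-typicality bound: the $\Gamma$-promise bounds $|A_\ell^1|$ per $\ell$ but does not preclude a single $\ww_0$ from witnessing many different pairs in $k$'s list simultaneously, so obtaining the worst-case bound on $|\{\ell : \ww_0 \in A_\ell^1\}|$ seems to require that the randomness of the sampling that produced $\Lambda_x$ (and possibly of the partition $\Vv'$) be exploited essentially.
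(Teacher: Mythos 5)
Your setup --- casting Step 3 as $m$ parallel Grover searches over $X=\Vv'$ per node $(\uu,\vv,x)$, building the evaluator by routing each queried pair to the node $(\uu,\vv,\ww)$ that stored the relevant weights in Step 1, and invoking Theorem~\ref{th:search} --- is exactly the paper's starting point. But the gap you flag at the end is real and is precisely the crux of the proof, and your proposed fix does not close it. The solution-typicality condition $A_1^1\times\cdots\times A_m^1\subseteq\Upsilon_{\beta/2}(m,\Vv')$ with $\beta=\tilde\Theta(\sqrt n)$ requires that \emph{every} block $\ww_0$ satisfy $|\Lambda_x(\uu,\vv)\cap\Delta(\uu,\vv;\ww_0)|\le\beta/2=\tilde O(\sqrt{n})$. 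Your double-counting argument only bounds the \emph{sum} $\sum_{\ww_0\in\Vv'}|\Delta(\uu,\vv;\ww_0)|\le 90 n^{3/2}\log n$, i.e., the average over $\ww_0$; the promise of $\NTT$ bounds witnesses per pair, not pairs per witness, so a single vertex $w_0$ may complete negative triangles with $\Theta(n^{3/2})$ pairs of $\Pp(\uu,\vv)$, forcing $|\Delta(\uu,\vv;\ww_0)|=\Theta(n^{3/2})$ for whichever block contains it and hence $|\Lambda_x(\uu,\vv)\cap\Delta(\uu,\vv;\ww_0)|=\tilde\Theta(n)$ in expectation. Chernoff concentration is useless here because the expectation itself is too large, and a random re-indexing of $\Vv'$ cannot help since the heavy vertex lands in \emph{some} block deterministically. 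With such a heavy block the uniform superposition over solutions lies outside $\Upsilon_\beta(m,\Vv')$, the evaluator is permitted to err exactly on the states the search converges to, and Theorem~\ref{th:search} no longer applies.

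The paper's missing ingredient is a stratification of the search space by heaviness. Algorithm \textsc{IdentifyClass} estimates $|\Delta(\uu,\vv;\ww)|$ by sampling and partitions $\Tt$ into classes $\Tt_\alpha$ with $|\Delta(\uu,\vv;\ww)|\approx 2^\alpha n$; Step~3 then runs a \emph{separate} batch of quantum searches for each $\alpha$ with $X=\Tt_\alpha[\uu,\vv]$ and $\beta=800\cdot 2^\alpha\sqrt n\log n$, so the typicality condition holds within each class by Lemma~\ref{lemma:cond2}. The congestion caused by the larger $\beta$ is counterbalanced by Lemma~\ref{lemma:ub}: the promise forces $|\Tt_\alpha[\uu,\vv]|\le 720\sqrt n(\log n)/2^\alpha$, so the data held by each heavy destination $(\uu,\vv,\ww)$ can be duplicated across $\approx 2^\alpha/(720\log n)$ helper nodes, restoring $\tilde O(1)$-round evaluation. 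Since there are only $O(\log n)$ classes, the total cost stays $\tilde O(n^{1/4})$. Without this (or an equivalent) load-balancing mechanism for the heavy blocks, your argument establishes the result only under the additional, unjustified assumption that no block of $\Vv'$ witnesses more than $\tilde O(n)$ pairs of $\Pp(\uu,\vv)$.
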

Proposition \ref{prop:quantum} combined with the analysis done in this subsection shows that Algorithm \textsc{ComputePairs} solves the problem $\NTT$ with probability at least $1-O(1/n)$ (from the union bound). Its overall complexity is $\tilde O(n^{1/4})$. This proves Theorem \ref{th:main}.

\paragraph{Overview of the proof of Proposition \ref{prop:quantum}.}
Proposition \ref{prop:quantum} is proved by applying the methodology of Section \ref{sec:quantum} to perform simultaneous quantum searches over the search space $\Vv'$. A crucial point of the analysis is to show how to implement the checking procedure in $\tilde O(1)$ rounds. Let us discuss below the main difficulties that need to be overcome.

Consider the problem of checking, for some pair $(u^k_\ell,v^k_\ell)\in \uu\times \vv$ and some fixed $\ww\in \Vv'$, whether there exists $w\in \ww$ for which $(u^k_\ell,v^k_\ell,w)$ is a negative triangle. This can be done easily as follows: node $(\uu,\vv,x)$ first sends the pair $(u^k_\ell,v^k_\ell)$ and the weight $f(u^k_\ell,v^k_\ell)$ to node $(\uu,\vv,\ww)$. Node $(\uu,\vv,\ww)$ then checks whether the inequality
\[
\min_{w\in \ww}\{f(u^k_\ell,w)+f(w,v^k_\ell)\} \le f(u^k_\ell,v^k_\ell)
\]
holds, which can be done locally from the information gathered at Step 1 of Algorithm \textsc{ComputePairs}, and sends back this information to node $(\uu,\vv,x)$. 

For each $(\uu,\vv,x)\in\Vv\times\Vv\times[\sqrt{n}]$, node $(\uu,\vv,x)$ will execute simultaneously $m$ executions of this checking procedure (one for each value of $\ell$). Each node $(\uu,\vv,\ww)$ can thus receive, in the worst case, $m\sqrt{n}=\tilde\Theta(n^{3/2})$ pairs during one call of the checking procedure, which would require $\tilde\Theta(\sqrt{n})$ rounds.   To reduce the checking cost to $\tilde O(1)$ rounds, as needed, we will partition the set~$\Tt$ into classes and use this partition to balance the load of the checking queries in order to avoid congestions. 

The partitioning of $\Tt$ is described in Section \ref{sub:division}. It will in particular identify the triples of~$\Tt$ containing many edges from $S$ involved in negative triangles. These triples are the main source for the possible congestions in the checking procedure. A simple, but crucial, observation is that there cannot exist many such triples, since the promise of $\NTT$ guarantees that the total number of negative triangles in the graph is low. This observation is the key idea on which the implementation of the load balancing is based.  
 
\subsection{Implementation of Step 3: Dividing the set $\Tt$ into classes}\label{sub:division}
Let us first introduce a crucial definition.
\begin{definition}
For any $(\uu,\vv,\ww)\in \Tt$, let $\Delta(\uu,\vv;\ww)$ be the following quantity:
\[
\Delta(\uu,\vv;\ww)
=
\big\{
\{u,v\}\in \Pp(\uu, \vv) \cap S \:|\:\exists w\in\ww\textrm{ such that $\{u,v,w\}$ is a negative triangle in $G$}
\big\}.
\]
\end{definition}

The goal of this subsection is to divide the set of triples $\Tt$ into classes according to the value of $|\Delta(\uu,\vv;\ww)|$. Since we do not know how to compute exactly this value efficiently, we actually need to define the classification based on an approximation of $|\Delta(\uu,\vv;\ww)|$ that can be computed efficiently. In Figure~\ref{fig:algorithm1} we describe a classical algorithm called \textsc{IdentifyClass} that either aborts or assign a nonnegative integer $c_{\uu\vv\ww}$ to each node $(\uu,\vv,\ww)\in \Tt$. Note that the complexity of the algorithm is $O(\log n)$ rounds: Step 1 can obviously be implemented in $20\log n$ rounds and Step 3 does not require any communication. In the case where the algorithm does not abort, we write~$\Tt_\alpha$ the set of all triples $(\uu,\vv,\ww)\in \Tt$ such that $c_{\uu\vv\ww}=\alpha$, for each integer $\alpha\ge 0$. This defines a partition of the set $\Tt$. We now show that with high probability the algorithm does not abort and the partition indeed classifies the triples according to the value of $|\Delta(\uu,\vv;\ww)|$.

\begin{figure}[ht!]
\begin{center}
\fbox{
\begin{minipage}{15 cm} 
\begin{itemize}
\item[1.]
Each node $u\in V$ selects each vertex in $\{v\in V\:|\:\{u,v\}\in S\}$ with probability $(10\log n)/n$. Let $\Lambda(u)$ denote the set of selected vertices. If $|\Lambda(u)|>20\log n$ for at least one node $u$, then the algorithm aborts. Otherwise, each node $u$ broadcasts the set $\Lambda(u)$ to all the nodes in $V$.

Let us write 
$
R=\cup_{u\in V}\{\{u,v\}\:|\:v\in \Lambda(u)\}.
$

\item[2.]
Each node $(\uu,\vv,\ww)\in \Tt$ locally computes the value 
\[
d_{\uu\vv\ww}=
\Big|\big\{
\{u,v\}\in \Pp(\uu, \vv) \cap R \:\:|\:\:\exists w\in\ww\textrm{ s.t. $\{u,v,w\}$ is a negative triangle in $G$}
\big\}\Big|
\]
and computes the integer 
$c_{\uu\vv\ww}$ defined as the smallest $c\ge 0$ such that
$
d_{\uu\vv\ww}<10\cdot 2^c \log n.
$
\end{itemize}
\end{minipage}
}
\end{center}\vspace{-6mm}
\caption{Algorithm \textsc{IdentifyClass}.}\label{fig:algorithm1}
\end{figure}

\begin{proposition}\label{prop:identify-class}
With probability at least $1-2/n$, Algorithm \textsc{IdentifyClass} does not abort and the partition $\{\Tt_\alpha\}_{\alpha\ge 0}$ satisfies the following conditions:
\begin{itemize}
\item[(i)]
for any $(\uu,\vv,\ww)\in \Tt_0$, the inequality $|\Delta(\uu,\vv;\ww)|\le 2 n$ holds;
\item [(ii)]
for any $\alpha>0$ and any $(\uu,\vv,\ww)\in \Tt_\alpha$, the inequalities $2^{\alpha-3}n \le |\Delta(\uu,\vv;\ww)|\le 2^{\alpha+1} n$ hold.
\end{itemize}
\end{proposition}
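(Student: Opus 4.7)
The proposition has two parts: (a) Algorithm \textsc{IdentifyClass} does not abort, and (b) conditional on not aborting, the partition $\{\Tt_\alpha\}_{\alpha\ge 0}$ gives the claimed bounds on $|\Delta(\uu,\vv;\ww)|$. The plan is to establish (a) and (b) with error probability at most $1/n$ each, so the union bound yields the stated $2/n$. Both estimates rely on Chernoff bounds combined with a union bound over the $|\Tt|=|\Vv|^2|\Vv'|=n$ triples (using $|\Vv|=n^{1/4}$ and $|\Vv'|=\sqrt n$).

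For (a), each $|\Lambda(u)|$ is a sum of at most $n$ independent Bernoullis with parameter $10(\log n)/n$, so $\E[|\Lambda(u)|]\le 10\log n$; a multiplicative Chernoff bound yields $\Pr[|\Lambda(u)|>20\log n]\le 1/n^{\Omega(1)}$, and a union bound over the $n$ nodes bounds the abort probability by $1/n$.

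For (b), the key structural observation is that for any fixed pair $\{u,v\}\in S$ one has $\{u,v\}\in R$ iff $(v\in\Lambda(u))\vee(u\in\Lambda(v))$, so $p:=\Pr[\{u,v\}\in R]=1-(1-10\log n/n)^2=(20\log n/n)(1-o(1))$; moreover, distinct pairs in $S$ involve disjoint sets of coin flips, so the indicators $\mathbf{1}[\{u,v\}\in R]$ are mutually independent across distinct pairs. Consequently, for each fixed triple $(\uu,\vv,\ww)\in\Tt$, the counter $d_{\uu\vv\ww}$ is binomially distributed with parameters $|\Delta(\uu,\vv;\ww)|$ and $p$, and has expectation $\mu:=p\cdot|\Delta(\uu,\vv;\ww)|$. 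Applying Chernoff on both tails shows that $d_{\uu\vv\ww}\in[\mu/4,2\mu]$ with probability $1-1/n^{\Omega(1)}$ whenever $\mu=\Omega(\log n)$, and when $\mu$ is smaller a direct Chernoff upper-tail bound still gives $\Pr[d_{\uu\vv\ww}\ge C\log n]\le 1/n^{\Omega(1)}$ for any fixed constant $C$. A union bound over the $n$ triples establishes these concentration statements simultaneously with probability $1-1/n$. Translating back: a triple with $|\Delta|>2n$ has $\mu>40\log n$, hence $d_{\uu\vv\ww}\ge 10\log n$ and the triple is not in $\Tt_0$, giving~(i); a triple in $\Tt_\alpha$ satisfies $d_{\uu\vv\ww}\in[5\cdot 2^\alpha\log n,10\cdot 2^\alpha\log n)$, hence $\mu\in[5\cdot 2^{\alpha-1}\log n,40\cdot 2^\alpha\log n]$, and dividing by $p$ yields the interval $[2^{\alpha-3}n,2^{\alpha+1}n]$ claimed in~(ii).

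The main technical subtlety will be the lower-tail Chernoff for the smallest class ($\alpha=1$), where $\mu$ is only $\Theta(\log n)$ and the Chernoff exponent itself is only logarithmic in $n$; one must verify that the multiplicative constants in the threshold ($10$) and the factor-$2^{\pm}$ slack in~(ii) are large enough that the resulting exponent beats the union-bound factor $n$. A secondary point worth spelling out is the independence claim for the $R$-membership indicators across distinct pairs, which is cleanly justified by the disjointness of the coin flips they depend on.
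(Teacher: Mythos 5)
Your proposal is correct and follows essentially the same route as the paper's proof: a Chernoff-plus-union bound over the $n$ nodes for the abort condition, then Chernoff concentration of $d_{\uu\vv\ww}$ around $p\cdot|\Delta(\uu,\vv;\ww)|$ for each of the $n$ triples, followed by translating the thresholds $10\cdot 2^c\log n$ back into the claimed ranges for $|\Delta(\uu,\vv;\ww)|$. The only substantive difference is that you compute the inclusion probability as $p=1-(1-10\log n/n)^2\approx 20\log n/n$ and justify independence across pairs via disjoint coin flips, whereas the paper implicitly takes $\E[d_{\uu\vv\ww}]=10\log n\cdot|\Delta(\uu,\vv;\ww)|/n$ (ignoring that a pair can enter $R$ through either endpoint); your bookkeeping is the more careful one, and the concern you flag about the Chernoff exponent versus the union bound in the marginal regime $\mu=\Theta(\log n)$ is exactly the place where the constants need checking in either version.
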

\begin{proof}[Proof of Proposition \ref{prop:identify-class}]
Let us first compute the probability that the protocol does not abort. For each node $u\in V$, let $X_u$ be the random variable representing the number of neighbors chosen by~$u$, i.e., $X_u=|\Lambda(u)|$. Observe that 
\[
\E[X_u]=\frac{10\log n \times |\{v\in V\:|\:\{u,v\}\in S\}|}{n}\le 10\log n.
\] 
Chernoff's bound implies the inequality
\[
\Pr[X_i\ge 60\log n]
<\frac{1}{n^2}.
\]
The probability that the protocol does not abort is thus $1-1/n$, from the union bound.

We now consider the probability that Conditions (i) and (ii) hold.
Let us consider an arbitrary triple $(\uu,\vv,\ww)\in \Tt$. The expectation of the random variable $\delta_{\uu,\vv,\ww}$ is 
\[
\E[\delta_{\uu,\vv,\ww}]=\frac{10\log n \times |\Delta(\uu,\vv;\ww)|}{n}.
\]
We divide our analysis into three cases.

\begin{itemize}
\item
The case where $|\Delta(\uu,\vv;\ww)|\le n/6$. Chernoff's bound shows that
\[
\Pr[\delta_{\uu,\vv,\ww}\ge 10\log n]<2^{-10\log n}<\frac{1}{n^2}.
\]
Thus $c_{\uu\vv\ww}=0$ with probability at least $1-1/n^2$.
\item
The case where $|\Delta(\uu,\vv;\ww)|> n/6$ and $|\Delta(\uu,\vv;\ww)|< 2^{c-3}n$, for some $c\ge 1$. 
Chernoff's bound implies that 
\[
\Pr[\delta_{\uu,\vv,\ww}\ge 10\cdot 2^{c-1} \log n]\le \Pr\left[\delta_{\uu,\vv,\ww}\ge 4\E[\delta_{\uu,\vv,\ww}]\right]\le \exp\left(-\frac{90\log n}{12}\right)<\frac{1}{n^2}.
\]
Thus $c_{\uu\vv\ww}\ge c$ with probability at most $1/n^2$.

\item
Finally, the case $|\Delta(\uu,\vv;\ww)|> 2^{c+1}n$ for some $c\ge 0$. 
Chernoff's bound implies that 
\begin{align*}
\Pr[\delta_{\uu,\vv,\ww}< 10\cdot 2^c \log n]\le& \Pr\left[\delta_{\uu,\vv,\ww}< \frac{1}{2}\E[\delta_{\uu,\vv,\ww}]\right] \\
\le& \exp\left(-\frac{1}{8}\times 10\log n \times 2^{c+1}\right)<\frac{1}{n^2}.
\end{align*}
Thus $c_{\uu\vv\ww}\le c$ with probability at most $1/n^2$.
\end{itemize}
We conclude that the probability that the outputs of all the nodes $(\uu,\vv,\ww)\in \Tt$ satisfy Conditions~(i) and~(ii) is at least $1-1/n$, from the union bound.

Finally, the union bound again guarantees that the probability that the protocol does not abort and all the nodes $(\uu,\vv,\ww)\in \Tt$ satisfy Conditions (i) and (ii) is at least $1-2/n$.
\end{proof}

\subsection{Implementation of Step 3: Details and proof of Proposition \ref{prop:quantum}}\label{sub:quantum}
In this subsection we describe the details of the implementation of Step 3 in Algorithm \textsc{ComputePairs}, which is the only part of the algorithm that uses quantum computation. 

The nodes first apply Algorithm \textsc{IdentifyClass}. Proposition \ref{prop:identify-class} guarantees that with probability at least $1-2/n$ Algorithm \textsc{IdentifyClass} does not abort and the partition $\{\Tt_\alpha\}_{\alpha\ge 0}$ satisfies the two conditions of the proposition. In all this subsection we will assume that this happens. 

Let us write
\[
\Tt_\alpha[\uu,\vv]=\{\ww\in \Vv'\:|\:(\uu,\vv,\ww)\in\Tt_\alpha\}
\]
for any $(\uu,\vv)\in\Vv\times\Vv$ and any $\alpha\ge 0$. We will later use the following two lemmas that are direct consequences of the bounds given in Proposition \ref{prop:identify-class}. 

\begin{lemma}\label{lemma:cond2}
With probability at least $1-1/n^2$
the inequality 
\[
|\Lambda_x(\uu,\vv)\cap \Delta(\uu,\vv;\ww)|\le 100\cdot 2^\alpha\sqrt{n}\log n
\]
holds
for all $(\uu,\vv,x)\in \Vv\times \Vv\times [\sqrt{n}]$, all $\alpha\ge 0$ and all $\ww\in\Tt_\alpha[\uu,\vv]$.
\end{lemma}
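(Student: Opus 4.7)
The plan is to reduce the statement to a standard Chernoff-type tail bound applied independently for each quadruple $(\uu,\vv,x,\ww)$, followed by a union bound. First, for each such quadruple, I would observe that by the construction in Step~2 of Algorithm~\textsc{ComputePairs}, the quantity $Y := |\Lambda_x(\uu,\vv) \cap \Delta(\uu,\vv;\ww)|$ is a binomial random variable: it counts how many of the $k := |\Delta(\uu,\vv;\ww)|$ pairs in $\Delta(\uu,\vv;\ww)$ survive independent coin flips of bias $p := 10\log n/\sqrt{n}$, so its mean is $\mu := kp$. Since the randomness generating the sets $\Lambda_x(\uu,\vv)$ is completely independent of that used inside Algorithm~\textsc{IdentifyClass}, I can work under the favourable event of Proposition~\ref{prop:identify-class} without any change to the distribution of $Y$.

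Next I would feed in the classification bounds from Proposition~\ref{prop:identify-class}: $k\le 2n$ if $\alpha=0$ and $k\le 2^{\alpha+1}n$ if $\alpha\ge 1$, so in either case $\mu\le 20\cdot 2^\alpha\sqrt{n}\log n$. With target $a := 100\cdot 2^\alpha\sqrt{n}\log n$, this gives $a\ge 5\mu$, which is precisely the regime where a multiplicative Chernoff's bound gives a tail of the form $\exp(-\Omega(a))$. Since $a\ge 100\sqrt{n}\log n$, the per-quadruple failure probability is at most $\exp(-\Omega(\sqrt{n}\log n))$, which is super-polynomially small. (The degenerate case $k=0$ is trivial because then $Y=0$.)

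Finally, conditioned on the outcome of Algorithm~\textsc{IdentifyClass}, the index $\alpha$ is a deterministic function of $(\uu,\vv,\ww)$, so the quadruples $(\uu,\vv,x,\ww)$ are indexed by $\Vv\times\Vv\times[\sqrt{n}]\times\Vv'$, numbering $|\Vv|^2\cdot\sqrt{n}\cdot|\Vv'| = n^{3/2}$. A union bound then yields the claimed $1-1/n^2$ bound with plenty of room to spare. The argument is essentially routine; the only points requiring a moment of care are the independence of the two sources of randomness (so that the Chernoff step may be applied under the conditioning) and the observation that the value of $\alpha$ is determined by $(\uu,\vv,\ww)$ after this conditioning, so that the union-bound cardinality is $n^{3/2}$ rather than something larger.
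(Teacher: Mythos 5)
Your proposal is correct and follows essentially the same route as the paper's proof: bound $\E\bigl[|\Lambda_x(\uu,\vv)\cap\Delta(\uu,\vv;\ww)|\bigr]$ by $20\cdot 2^\alpha\sqrt{n}\log n$ using the classification bounds of Proposition~\ref{prop:identify-class}, apply a multiplicative Chernoff bound to get a super-polynomially small per-quadruple failure probability, and finish with a union bound. Your explicit remarks on the independence of the sampling randomness from that of \textsc{IdentifyClass} and on the $n^{3/2}$ count in the union bound are points the paper leaves implicit, but they do not change the argument.
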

\begin{proof}
Let us fix $(\uu,\vv)\in \Vv\times\Vv$. Consider any $x\in[\sqrt{n}]$, any $\alpha\ge 0$ and any $\ww\in\Tt_\alpha[\uu,\vv]$. 
Observe that 
\[
\E[|\Lambda_x(\uu,\vv)\cap \Delta(\uu,\vv;\ww)|]=|\Delta(\uu,\vv;\ww)|\times \frac{10\log n}{\sqrt{n}}\le 10\cdot 2^{\alpha+1}\sqrt{n}\log n.
\]
Chernoff's bound implies that the inequality $|\Lambda_x(\uu,\vv)\cap \Delta(\uu,\vv;\ww)|\le 100\cdot 2^\alpha\sqrt{n}\log n$ holds with probability at most $1/n^5$. The statement of the lemma then follows from the union bound. 
\end{proof}
\begin{lemma}\label{lemma:ub}
The following inequality holds for all 
$\alpha\ge0$ and all $(\uu,\vv)\in \Vv\times \Vv$:
\[
|\Tt_\alpha[\uu,\vv]|\le \frac{720\sqrt{n}\log n}{2^{\alpha}}. 
\]
\end{lemma}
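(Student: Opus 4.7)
The plan is to double-count the quantity $\sum_{\ww\in\Tt_\alpha[\uu,\vv]}|\Delta(\uu,\vv;\ww)|$, bounding it from below via the lower bound on $|\Delta(\uu,\vv;\ww)|$ supplied by Proposition~\ref{prop:identify-class}(ii), and from above via the $\NTT$ promise $\Gamma(u,v)\le 90\log n$. Dividing the two bounds will then extract the claimed bound on $|\Tt_\alpha[\uu,\vv]|$.

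The case $\alpha=0$ will be disposed of immediately: $\Tt_0[\uu,\vv]\subseteq\Vv'$ has size at most $\sqrt{n}$, which is well below the claimed bound. So assume $\alpha\ge 1$. Proposition~\ref{prop:identify-class}(ii) then gives $|\Delta(\uu,\vv;\ww)|\ge 2^{\alpha-3}n$ for every $\ww\in\Tt_\alpha[\uu,\vv]$, so the sum is at least $|\Tt_\alpha[\uu,\vv]|\cdot 2^{\alpha-3}n$.

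For the matching upper bound I will swap the order of summation. The key observation is that $\Vv'$ is a genuine partition of $V$, so each vertex $w\in V$ lies in exactly one $\ww\in\Vv'$. Consequently, a single pair $\{u,v\}\in\Pp(\uu,\vv)\cap S$ contributes to $|\Delta(\uu,\vv;\ww)|$ for at most $\Gamma(u,v)$ distinct choices of $\ww$ (namely those containing some $w$ with $\{u,v,w\}$ a negative triangle), and the $\NTT$ promise caps this quantity by $90\log n$. Combined with $|\Pp(\uu,\vv)|\le|\uu|\cdot|\vv|\le n^{3/2}$, this yields $\sum_{\ww\in\Vv'}|\Delta(\uu,\vv;\ww)|\le 90\,n^{3/2}\log n$. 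Putting the two bounds together gives $|\Tt_\alpha[\uu,\vv]|\le 90\,n^{3/2}\log n/(2^{\alpha-3}n)=720\sqrt{n}\log n/2^\alpha$, as desired (note $90\cdot 2^3=720$).

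I do not anticipate any real obstacle: once the right perspective is adopted, the argument reduces to a one-line double counting. The only subtle point worth flagging is that the per-pair charge of $\Gamma(u,v)$ is tight precisely because $\Vv'$ is a partition rather than a covering (in contrast to the family $\{\Lambda_x(\uu,\vv)\}_x$ used elsewhere), so that each negative triangle $\{u,v,w\}$ is charged to exactly one outer index $\ww$.
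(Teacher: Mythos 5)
Your proof is correct and follows essentially the same route as the paper's: bound $\sum_{\ww\in\Tt_\alpha[\uu,\vv]}|\Delta(\uu,\vv;\ww)|$ from above by $90n^{3/2}\log n$ using the $\NTT$ promise, from below by $|\Tt_\alpha[\uu,\vv]|\cdot 2^{\alpha-3}n$ using Proposition~\ref{prop:identify-class}(ii), and divide. Your write-up is in fact slightly more explicit than the paper's (which leaves the per-pair charging argument and the $\alpha=0$ case implicit), but there is no substantive difference.
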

\begin{proof}
This is obviously true for $\alpha=0$. Let us now consider any $\alpha> 0$ and any $(\uu,\vv)\in \Vv\times \Vv$.
Remember that we are assuming that $\Gamma(u,v)\le 90\log n$ for all pairs $\{u,v\}\in S$. We thus have
\[
\sum_{\ww\in\Tt_\alpha[\uu,\vv]}\Delta(\uu,\vv;\ww)\le 90n^{3/2}\log n.
\] 
Combining this upper bound with the lower bound of Statement (ii) of Proposition \ref{prop:identify-class} gives the claimed upper bound on $|\Tt_\alpha[\uu,\vv]|$.
\end{proof}

To implement Step 3 of Algorithm \textsc{ComputePairs},
the strategy is to consider each $\alpha$ separately and perform simultaneous quantum searches over $\Tt_\alpha[\uu,\vv]$, as outlined in Figure~\ref{fig:Step3}. We first describe in Section~\ref{subsec:analysis1} how to implement these quantum searches in $\tilde O(n^{1/4})$ rounds for the case $\alpha=0$, and then in Section \ref{subsec:analysis1} how to achieve the same complexity for the case $\alpha>0$.

\begin{figure}[ht!]
\begin{center}
\fbox{
\begin{minipage}{15 cm} 
\begin{itemize}
\item[3.1.]
The nodes apply Algorithm \textsc{IdentifyClass}.
\item[3.2.] For each $\alpha$ do:\vspace{-3mm}

\begin{itemize}
\item[]
Each node $k=(\uu,\vv,x)$ executes, for each $\ell\in [m]$,  a quantum search to check if there exists some $\ww\in \Tt_\alpha[\uu,\vv]$ such that there exists $w\in \ww$ for which $(u^k_\ell,v^k_\ell,w)$ is a negative triangle. The node $k$ outputs all the pairs $\{u^k_\ell,v^k_\ell\}$ for which the quantum search is successful. 
\end{itemize}
\end{itemize}
\end{minipage}
}
\end{center}\vspace{-6mm}
\caption{Details of the implementation of Step 3 of Algorithm \textsc{ComputePairs}.}\label{fig:Step3}
\end{figure}

\subsubsection{Analysis of Step 3.2 for $\boldsymbol{\alpha=0}$}\label{subsec:analysis1}
In Step 3.2 each node $k=(\uu,\vv,x)$ executes~$m$ simultaneous quantum searches. In order to describe this process using the framework presented in Section~\ref{sec:quantum}, with $X=\Tt_0[\uu,\vv]$ and $m=100 n\log n$, we need to explain the evaluation procedure. Since $\Tt_0[\uu,\vv]\subseteq \Vv'$, we have $|\Tt_0[\uu,\vv]|\le \sqrt{n}$. For simplicity (but without loss of generality) we assume below that $|\Tt_0[\uu,\vv]|= \sqrt{n}$.
Observe that the evaluation procedure should implement the following test: each node $k$, when evaluating a list $(\ww^{k}_1,\ldots,\ww^{k}_{m})$ of $m$ elements in $\Tt_0[\uu,\vv]$,  should check for each $\ell\in[m]$ whether there exists a vertex $w\in \ww^k_{\ell}$ such that $\{u^k_\ell,v^k_\ell,w\}$ is a negative triangle. 

Let $L^k_{\ww}\subseteq\Pp(\uu,\vv)$ denote the list consisting of all the pairs $\{u_i^k,v_i^k\}$ such that $\ww^k_i=\ww$, for each node $k=(\uu,\vv,x)$ and each $\ww\in\Tt_0[\uu,\vv]$. We make the assumption that $|L^k_{\ww}|\le 800\sqrt{n}\log n$
for all $k=(\uu,\vv,x)$ and all $\ww\in\Tt_0[\uu,\vv]$ and describe an evaluation procedure that works under this assumption. The procedure is described in Figure \ref{fig:checking2}. 
  
\begin{figure}[ht!]
\begin{center}
\fbox{
\begin{minipage}{15.5 cm} 
\textrm{Input:} each node $k=(\uu,\vv,x)$ receives a list $(\ww^{k}_1,\ldots,\ww^{k}_{m})$ of $m$ elements from $\Tt_0[\uu,\vv]$\\
\textrm{Promise:} the inequality $|L^k_{\ww}|\le 800\sqrt{n}\log n$ holds for each $k=(\uu,\vv,x)$ and each $\ww\in\Tt_0[u,v]$\\
\textrm{Output:} each node $k=(\uu,\vv,x)$ decides, for each $\ell\in[m]$, whether there exists $w\in \ww^k_{\ell}$\\ \vspace{-5mm}

\hspace{14mm} such that $\{u^k_\ell,v^k_\ell,w\}$ is a negative triangle
\begin{itemize}
\item[1.]
Each node $k=(\uu,\vv,x)$ sends the list $L^k_{\ww}$ to node $(\uu,\vv,\ww)$, for each $\ww\in\Tt_0[\uu,\vv]$. Together to each pair $(u,v)$ sent, its weight $f(u,v)$ is also sent.
\item[2.]
Each node $(\uu,\vv,\ww)\in \Tt_0$ checks, for each pair $\{u,v\}$ received at Step 1, whether the inequality
\begin{equation}\label{ineq:min}
\min_{w\in \ww}\{f(u,w)+f(w,v)\} \le f(u,v)
\end{equation}
holds and sends back this information to the node who sent this pair.
\end{itemize}
\end{minipage}
}
\end{center}\vspace{-7mm}
\caption{Evaluation procedure (in the case $\alpha=0$) for the quantum searches implemented at Step~3.2 of \textsc{ComputePairs}.}\label{fig:checking2}
\end{figure}

The procedure of Figure \ref{fig:checking2} obviously always outputs the correct answers since for each $k$ and each $\ell\in[m]$, Inequality (\ref{ineq:min}) at Step 2 precisely checks if there exists  some $w\in \ww_\ell^k$ such that $\{u_\ell^k,v_\ell^k,w\}$ is a negative triangle (because the pair $\{u_\ell^k,v_\ell^k\}$ is sent to node $(\uu,\vv,\ww_\ell^k)$). We now analyze its complexity. Since each list $L^k_{\ww}$ contains at most $800\sqrt{n}\log n$ elements, at Step 1 each node $k=(\uu,\vv,x)$ sends at most $800\sqrt{n}\log n$ elements to $(\uu,\vv,\ww)$ for each $\ww\in \Tt_0[\uu,\vv]$. 
Conversely,  each node $(\uu,\vv,\ww)\in \Tt_0$ receives at most $800\sqrt{n}\log n$ elements from $(\uu,\vv,x)$ for each $x\in[\sqrt{n}]$. Thus, in the CONGEST-CLIQUE model, Step 1 can be implemented in $O(\log n)$ rounds.
Testing whether Inequality (\ref{ineq:min}) holds or not at Step 2 can be done locally using the information collected at Step~1 of Algorithm \textsc{ComputePairs}. Sending back the information at Step 2 can be done with the same complexity as in Step 1. The complexity of the checking procedure is thus $O(\log n)$ rounds.

We can  apply Theorem \ref{th:search} with $X=\Tt_0[\uu,\vv]$ and $\beta=800\sqrt{n}\log n$.  Lemma~\ref{lemma:cond2} guarantees that with probability at least $1-1/n^2$, the assumptions in the statement of Theorem~\ref{th:search} are satisfied. 
Theorem~\ref{th:search} then implies that for $\alpha=0$ the quantum searches of Step 3.2 of Algorithm \textsc{ComputePairs} can be implemented in $\tilde O(n^{1/4})$ rounds and succeed with probability at least $1-2/m^2$. 

 
\subsubsection{Analysis of Step 3.2 for $\boldsymbol{\alpha>0}$}
The analysis of the complexity of the approach presented in Section \ref{subsec:analysis1} crucially relied on the inequality from Lemma \ref{lemma:cond2}, which guarantees that $|\Lambda_x(\uu,\vv)\cap \Delta(\uu,\vv;\ww)|\le 100\cdot\sqrt{n}\log n$. For $\alpha>0$ and $(\uu,\vv,\ww)\in\Tt_\alpha$, Lemma~\ref{lemma:cond2} only gives the weaker upper bound 
\begin{equation}\label{ineq:alpha}
|\Lambda_x(\uu,\vv)\cap \Delta(\uu,\vv;\ww)|\le 100\cdot 2^\alpha\sqrt{n}\log n.
\end{equation} 
The upper bound from Lemma \ref{lemma:ub} is the key observation that will make possible to solve this technical issue.

In Section \ref{subsec:analysis1} each node $(\uu,\vv,x)$ communicated with node $(\uu,\vv,\ww)$ for each $\ww\in\Tt_0[\uu,\vv]$. We used the upper bound $\Tt_0[\uu,\vv]\le \sqrt{n}$ in the analysis. In the case $\alpha>0$ we can use the better upper bound from Lemma \ref{lemma:ub}, which reduces the number of destination nodes by (roughly) a factor~$2^\alpha$. In consequence, we can increase the bandwidth towards these destinations nodes by (roughly) a factor~$2^\alpha$. (This can be done by duplicating the information owned by the destination nodes.) We will show that this is enough to counterbalance the increase by a factor $2^\alpha$ of the message size due to Inequality~(\ref{ineq:alpha}).

We now give more details about the idea of duplicating information to increase the bandwidth. We introduce a new labeling scheme. For the ease of presentation let us assume that $2^\alpha/(720\log n)$ is an integer (if this is not the case the scheme just need to be slightly adapted). In the new scheme each node is assigned a distinct label in $(\uu,\vv,\ww,y)\in \Tt_\alpha\times [2^\alpha/(720\log n)]$. Lemma \ref{lemma:ub} ensures that this can be done. 

Similarly to Section \ref{subsec:analysis1}, let $L^k_{\ww}\subseteq\Pp(\uu,\vv)$ denote the list consisting of all the pairs $\{u_i^k,v_i^k\}$ such that $\ww^k_i=\ww$, for each node $k=(\uu,\vv,x)$ and each $\ww\in\Tt_\alpha[\uu,\vv]$. We make the assumption that $|L^k_{\ww}|\le 800\cdot 2^\alpha\sqrt{n}\log n$
for all $\ww\in\Tt_\alpha[\uu,\vv]$ and describe an evaluation procedure that works under this assumption. The procedure is described in Figure \ref{fig:checking3}. The main difference with the procedure in Section \ref{subsec:analysis1} is that instead of sending the whole list we divide it in sublists and send the sublist $L^k_{\ww,y}$ to $(\uu,\vv,\ww,y)$ for each $y\in[2^\alpha/(720\log n)]$. Another difference is Step 0: each node $(\uu,\vv,\ww)$ first duplicates its input by broadcasting it to all the nodes $(\uu,\vv,\ww,y)$, which can be done in $O(n^{1/4})$ rounds using a randomized routing scheme.

\begin{figure}[ht!]
\begin{center}
\fbox{
\begin{minipage}{15.5 cm} 
\textrm{Input:} each node $k=(\uu,\vv,x)$ receives a list $(\ww^{k}_1,\ldots,\ww^{k}_{m})$ of $m$ elements from $\Tt_\alpha[\uu,\vv]$\\
\textrm{Promise:} the inequality $|L^k_{\ww}|\le 800\cdot 2^\alpha\sqrt{n}\log n$ holds for each $k=(\uu,\vv,x)$ and 

\hspace{15mm} each $\ww\in\Tt_\alpha[u,v]$\\
\textrm{Output:} each node $k=(\uu,\vv,x)$ decides, for each $\ell\in[m]$, whether there exists $w\in \ww^k_{\ell}$\\ \vspace{-5mm}

\hspace{14mm} such that $\{u^k_\ell,v^k_\ell,w\}$ is a negative triangle
\begin{itemize}
\item[0.]
Each node $(\uu,\vv,\ww)\in\Tt_\alpha$ broadcasts the information loaded at Step 1 of Algorithm \textsc{ComputePairs} to the nodes  $(\uu,\vv,\ww,y)$ for all $y\in [2^\alpha/(720\log n)]$.
\item[1.]
Each node $k=(\uu,\vv,x)$ divides, for each $\ww\in\Tt_\alpha[\uu,\vv]$, the list $L^k_{\ww}$ into sublists 
\[
L^k_{\ww,1}, L^k_{\ww,2}, \ldots, L^k_{\ww,2^\alpha/(720\log n)}
\] each containing $O(\sqrt{n}(\log n)^2)$ elements. For each $\ww\in\Tt_\alpha[\uu,\vv]$ and each $y\in[2^\alpha/(720\log n)]$, node $k$ sends the sublist $L^k_{\ww,y}$ to node $(\uu,\vv,\ww,y)$. Together to each pair $(u,v)$ sent, its weight $f(u,v)$ is also sent.

\item[2.]
Each node $(\uu,\vv,\ww,y)\in \Tt_\alpha \times [2^\alpha/(720\log n)]$ checks, for each pair $(u,v)$ received at Step 1, whether the inequality
\[
\min_{w\in \ww}\{f(u,w)+f(w,v)\} \le f(u,v)
\]
holds and sends back this bit of information to the node who sent this pair.
\end{itemize}
\end{minipage}
}
\end{center}\vspace{-6mm}
\caption{Evaluation procedure (in the case $\alpha>0$) for the quantum searches implemented at Step~4 of \textsc{ComputePairs}.}\label{fig:checking3}
\end{figure}

We now analyze the complexity of Steps 1 and 2 of the evaluation procedure. Since each list $L^k_{\ww,y}$ contains at most $O(\sqrt{n}(\log n)^2)$ elements, at Step 1 each node $k=(\uu,\vv,x)$ sends a list containing $O(\sqrt{n}(\log n)^2)$ elements to 
\[
|\Tt_\alpha[\uu,\vv]|\times (2^\alpha/(720\log n))\le \sqrt{n}
\]
nodes (here we used Lemma (\ref{lemma:ub})).
Conversely, each node $(\uu,\vv,\ww,y)\in \Tt_{\alpha}\times [2^\alpha/(720\log n)]$ receives $O(\sqrt{n}(\log n)^2)$ elements from $(\uu,\vv,x)$ for each $x\in[\sqrt{n}]$. Thus, in the CONGEST-CLIQUE model, Step 1 can be implemented in $O((\log n)^2)$ rounds.
Testing whether Inequality (\ref{ineq:min}) holds or not at Step 2 can be done locally using the information collected at Step~1. Sending back the information at the end of Step~2 can be done with the same complexity as in Step 1. The complexity of the checking procedure is thus $O((\log n)^2)$ rounds.

We can apply Theorem \ref{th:search} with $X=\Tt_\alpha[\uu,\vv]$ and $\beta=800\cdot 2^\alpha\sqrt{n}\log n$. Lemma~\ref{lemma:cond2} guarantees that with probability at least $1-1/n^2$, the assumptions in the statement of Theorem~\ref{th:search} are satisfied.  Theorem~\ref{th:search} then implies that for $\alpha>0$ as well the quantum searches of Step 4 of Algorithm \textsc{ComputePairs} can be implemented in $\tilde O(n^{1/4})$ rounds and succeed with probability at least $1-2/m^2$. 


\section*{Acknowledgements}
TI was partially supported by JST SICORP and JSPS KAKENHI grants No. 16H02878 and No. 19K11824.
FLG was partially supported by JSPS KAKENHI grants No.~15H01677, No.~16H01705, No.~16H05853 and No.~19H04066.

\appendix

\section{Distributed multiple quantum searches}\label{appendix:quantum}
In this appendix we prove Theorem \ref{th:search}. 

Let $\Qq$ denote the $\tilde O(r\sqrt{|X|})$-round quantum algorithm described at the end of Section \ref{sub:LGM}. Remember that this algorithm implements in parallel $m$ independent executions of Grover's algorithm and uses Algorithm~$\Cc_m$ as a global evaluation procedure.
We first analyze this algorithm in more details.
For any string $b\in\{0,1\}^m$ let us define the quantum state
\[
\ket{\psi^b}=\ket{\psi^{b_1}_1}\otimes\cdots\otimes \ket{\psi^{b_m}_m},
\]
where 
\[
\ket{\psi_i^0}=\frac{1}{\sqrt{|A_i^0|}}\sum_{x\in A_i^0}\ket{x}\hspace{5mm}\textrm{ and }\hspace{5mm}
\ket{\psi_i^1}=\frac{1}{\sqrt{|A_i^1|}}\sum_{x\in A_i^1}\ket{x}
\]
for each $i\in[m]$. Let $\Hh_m$ denote the Hilbert space spanned by all the quantum states in the set $\{\ket{\psi^b}\}_{b\in\{0,1\}^n}$. An important observation is that Algorithm $\Qq$ leaves the space $\Hh_m$ invariant. The initial state of Algorithm $\Qq$ is
\[
\ket{\Phi^m_0}=\frac{1}{\sqrt{|X|^m}}\sum_{(x_1,\ldots,x_m)\in X^m}\ket{x_1}\otimes\cdots\otimes\ket{x_m},
\]
which is in $\Hh_m$. For each $k\ge 0$, one step of the algorithm maps the state $\ket{\Phi^m_{k}}$ to the state
\[
\ket{\Phi^m_{k+1}} = U_m\Cc_m \ket{\Phi^m_{k}},
\]
where $U_m$ is a unitary operator independent of the function $g_1,\ldots,g_n$ and $\Cc_m$ represents the unitary operator corresponding to the quantum circuit obtained by converting the classical algorithm $\Cc_m$ into a quantum circuit. Analyzing Grover's algorithm shows that after $k=O(\sqrt{|X|})$ iterations the quantum state $\ket{\Phi_{k}^m}$ becomes close to the state $\ket{\psi_1^1}\otimes\cdots\otimes\ket{\psi_m^1}$, and thus measuring this state gives an element from $A_1^1\times\cdots\times A_m^1$ with high probability. This success probability can be amplified to (for instance) $1-1/m^2$ by repeating the algorithm a logarithmic number of time.

Let $\tilde \Qq$ be exactly the same algorithm as $\Qq$ but with each application of the quantum circuit corresponding to $\Cc_m$ replaced by an application of the quantum circuit corresponding to $\tilde \Cc_m$. Let us analyze the output of $\tilde \Qq$. As in Section \ref{sub:LGM}, we make the assumption $|A^1_i|\le |X|/2$, for all $i\in[m]$.
Let $\Hh'_m$ denote the Hilbert space spanned by all vectors $\ket{x_1}\otimes\cdots\otimes\ket{x_m}$ with $(x_1,\ldots,x_m)\in\Upsilon_\beta(m,X)$, and $\Hh''_m$ denote the Hilbert space spanned by all $\ket{x_1}\otimes\cdots\otimes\ket{x_m}$ with $(x_1,\ldots,x_m)\in X^m\setminus\Upsilon_\beta(m,X)$. Let $\Pi_m$ denote the projection into $\Hh''_m$. We first show the following crucial lemma.
\begin{lemma}\label{lemma:state}
Assume that 
$\beta>8m/|X|$ and
$A_1^1\times\cdots\times A_m^1\subseteq \Upsilon_{\beta/2}(m,X)$.
For any quantum state $\ket{\varphi}\in\Hh_m$ we have 
\[
\Big\|
\Pi_m \ket{\varphi}
\Big\|^2
<|X|\times \exp\left(-\frac{2m}{9|X|}\right).
\]
\end{lemma}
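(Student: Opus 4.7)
The plan is to work directly in the computational basis, leveraging the orthonormality of the family $\{\ket{\psi^b}\}_{b\in\{0,1\}^m}$. This orthonormality is immediate from $A_i^0\cap A_i^1=\emptyset$ and $A_i^0\cup A_i^1=X$: the supports $A_1^{b_1}\times\cdots\times A_m^{b_m}$ for the various $b$ partition $X^m$, so each $\vecb{x}\in X^m$ determines a unique $b(\vecb{x})\in\{0,1\}^m$, and $\langle\vecb{x}|\psi^b\rangle$ vanishes unless $b=b(\vecb{x})$, in which case it equals $1/\sqrt{\prod_i|A_i^{b_i}|}$. Expanding $\ket{\varphi}=\sum_b\alpha_b\ket{\psi^b}$ with $\sum_b|\alpha_b|^2=1$ and summing $|\langle\vecb{x}|\varphi\rangle|^2$ over $\vecb{x}\notin\Upsilon_\beta(m,X)$, grouped by the value of $b(\vecb{x})$, yields
\[
\|\Pi_m\ket{\varphi}\|^2=\sum_b|\alpha_b|^2\,\Pr_{\vecb{x}\sim U_b}\!\!\left[\vecb{x}\notin\Upsilon_\beta(m,X)\right],
\]
where $U_b$ denotes the uniform distribution on $A_1^{b_1}\times\cdots\times A_m^{b_m}$. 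Since the $|\alpha_b|^2$ form a convex combination, it suffices to bound this probability uniformly in $b$.

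Fix $b$ and set $I_0=\{i:b_i=0\}$, $I_1=\{i:b_i=1\}$. For each $x\in X$ I split the frequency of $x$ in the sample as $\mathrm{freq}(x)=\mathrm{freq}_{I_1}(x)+\mathrm{freq}_{I_0}(x)$ and handle the two pieces separately. For $\mathrm{freq}_{I_1}(x)$ I use the promise $A_1^1\times\cdots\times A_m^1\subseteq\Upsilon_{\beta/2}(m,X)$: given any realization of $(x_i)_{i\in I_1}$, extend it to $\vecb{y}\in A_1^1\times\cdots\times A_m^1$ by picking arbitrary $y_i\in A_i^1$ for $i\in I_0$ (possible since each $A_i^1$ is nonempty). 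The assumption forces the frequency of $x$ in $\vecb{y}$ to be at most $\beta/2$, and since $\mathrm{freq}_{I_1}(x)$ is no larger than this frequency, one obtains $\mathrm{freq}_{I_1}(x)\le\beta/2$ for every realization. Consequently $\mathrm{freq}(x)>\beta$ forces $\mathrm{freq}_{I_0}(x)>\beta/2$.

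For the $I_0$ part I exploit $|A_i^0|\ge|X|/2$, which follows from the standing assumption $|A_i^1|\le|X|/2$: each indicator $\mathbb{1}[x_i=x]$ with $i\in I_0$ is an independent Bernoulli random variable with mean at most $2/|X|$, so $\mathrm{freq}_{I_0}(x)$ is stochastically dominated by $\mathrm{Bin}(m,2/|X|)$, which has mean $\mu=2m/|X|$. The hypothesis $\beta>8m/|X|$ makes $\beta/2=(1+\delta)\mu$ with $\delta>1$, so the multiplicative Chernoff bound gives
\[
\Pr[\mathrm{freq}_{I_0}(x)\ge\beta/2]\le\exp\!\left(-\frac{\delta^2\mu}{2+\delta}\right)\le\exp\!\left(-\frac{\delta\mu}{3}\right)\le\exp\!\left(-\frac{2m}{3|X|}\right),
\]
where the last step uses $\delta\mu=\beta/2-\mu>2m/|X|$. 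A union bound over $x\in X$ then yields $\Pr_{U_b}[\vecb{x}\notin\Upsilon_\beta(m,X)]\le|X|\exp(-2m/(3|X|))$, comfortably below the target $|X|\exp(-2m/(9|X|))$.

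The main technical subtlety is the extension trick for the $I_1$ coordinates: without it, if some $|A_i^1|$ is as small as $1$, the $I_1$-frequencies could be as large as $|I_1|$ and Chernoff on $I_0$ alone would not bound $\mathrm{freq}(x)$. Once this point is handled, the orthonormality of $\{\ket{\psi^b}\}_b$ reduces the operator-norm estimate to a convex combination of union-bounded Chernoff tails, and the remainder is routine bookkeeping.
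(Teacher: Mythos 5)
Your proof is correct and follows essentially the same route as the paper's: decompose $\ket{\varphi}$ over the orthonormal family $\{\ket{\psi^b}\}$, reduce to a tail bound for each fixed $b$, use the promise $A_1^1\times\cdots\times A_m^1\subseteq \Upsilon_{\beta/2}(m,X)$ to cap the contribution of the coordinates with $b_i=1$, and apply Chernoff plus a union bound over $x\in X$ to the coordinates with $b_i=0$. Your write-up is in fact slightly more careful on one point the paper leaves implicit (the extension of a partial tuple to a full element of $A_1^1\times\cdots\times A_m^1$ to invoke the promise), and your split of the frequency into the $I_0$ and $I_1$ parts yields the marginally stronger exponent $2m/(3|X|)$, but these are refinements of the same argument rather than a different one.
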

\begin{proof}
The state $\ket{\varphi}$ can be written as
\[
\ket{\varphi}=\sum_{b\in\{0,1\}^m}\alpha_b \ket{\psi^b}
\]
for some amplitude $\alpha_b\in\mathbb{C}$ such that $\sum_{b\in\{0,1\}^m}|\alpha_b|^2=1$. Observe that
\[
\Big\|
\Pi_m \ket{\varphi}
\Big\|^2
=
\sum_{b\in\{0,1\}^m}|\alpha_b|^2 \Big\|
\Pi_m \ket{\psi^b}
\Big\|^2
\]
since all the vectors $\Pi_m \ket{\psi^b}$ are orthogonal.
We show below that the inequality
\begin{equation}\label{ineq1}
\Big\|
\Pi_m \ket{\psi^b}
\Big\|^2
<|X|\times \exp\left(-\frac{2m}{9|X|}\right)
\end{equation}
holds for any $b\in\{0,1\}^m$. The claimed upper bound on $\|\Pi_m\ket{\varphi}\|$ then immediately follows. 

Consider a string $b\in\{0,1\}^m$.
Let us assume, without loss of generality, that $b$ is the string with~$0$s in the first $\ell$ positions, followed by $1$s in the next $m-\ell$ positions, for some integer $\ell\in\{0,1,\ldots,m\}$. The state $\ket{\psi^b}$ is thus the uniform superposition of all the states $\ket{x_1}\otimes\cdots\otimes\ket{x_m}$ for all $(x_1,\ldots,x_\ell,x_{\ell+1},\ldots,x_m)\in A^0_1\times \cdots\times A^0_\ell\times A^1_{\ell+1}\times \cdots\times A^1_m$. For any choice of $(x_{\ell+1},\ldots,x_m)\in A^1_{\ell+1}\times \cdots\times A^1_m$, we claim that the fraction of $(x_1,\ldots,x_{\ell})\in A^0_{1}\times \cdots \times A^0_\ell$ such that $(x_1,\ldots,x_m)\notin \Upsilon_\beta(m,M)$ is at most 
\[
|X|\times \exp\left(-\frac{2m}{9|X|}\right).
\] 
This immediately implies Inequality (\ref{ineq1}).

Let us prove the claim. Remember that we are assuming $A_1^1\times\cdots\times A_m^1\subseteq \Upsilon_{\beta/2}(m,X)$.
For any $x\in X$, we thus know that there are at most $\beta/2$ indices $i\in\{\ell+1,\ldots,m\}$ such that $x_i=x$. For each $i\in \{1,\ldots, \ell\}$, the probability that an element taken uniformly at random from $A_i^0$ equals $x$ is at most $1/|A_i^0|\le 2/|X|$. When $(x_{1},\ldots,x_{\ell})$ is chosen uniformly at random in $A_{1}^0\times\cdots\times A_\ell^0$, the expected number of times $x$ appears is thus at most 
\[
\frac{2\ell}{|X|}+\frac{\beta}{2}\le \frac{2m}{|X|}+\frac{\beta}{2}<\frac{3}{4}\beta,
\]
where we used the assumption $\beta>8m/|X|$ for the last inequality.
Chernoff's bound implies that the probability that $x$ appears more than $\beta$ times is at most 
\[
\exp\left(-\frac{2m}{9|X|}\right),
\]
and the claim then follows from the union bound. 
\end{proof}

We can now analyze the output of Algorithm $\tilde \Qq$ and prove Theorem \ref{th:search}.
\begin{proof}[Proof of Theorem \ref{th:search}]
Let $\ket{\tilde{\Phi}_k^m}$ denote the state at the $k$-th iteration when executing Algorithm $\tilde \Qq$. Initially we have $\ket{\tilde{\Phi}_0^m}=\ket{{\Phi}_0^m}$. For any $k\ge 0$ let us write
\[
\ket{{\Phi}_k^m}=\ket{{\Phi}'_k}+\ket{{\Phi}''_k}
\textrm{ and }
\ket{\tilde{\Phi}_k^m}=\ket{\tilde{\Phi}'_k}+\ket{\tilde{\Phi}''_k},
\]
where $\ket{{\Phi}'_k}$ and $\ket{{\Phi}''_k}$ are the projections of $\ket{{\Phi}_k^m}$ into $\Hh_m'$ and $\Hh_m''$, respectively, and $\ket{{\tilde\Phi}'_k}$ and $\ket{{\tilde \Phi}''_k}$ are the projections of $\ket{{\tilde\Phi}_k^m}$ into $\Hh_m'$ and $\Hh_m''$, respectively. Note that $\Cc_m\ket{\Phi_{k}'}=\tilde\Cc_m\ket{\Phi_{k}'}$ for all $k\ge 0$.

We have
\begin{align*}
\Big\|
\ket{\Phi_{k+1}^m}
- 
\ket{\tilde{\Phi}_{k+1}^m}
\Big\|
&=
\Big\|
U_m\Cc_m\ket{\Phi_{k}^m}
- 
U_m\tilde \Cc_m\ket{\tilde{\Phi}_{k}^m}
\Big\|\\
&=
\Big\|
\Cc_m\ket{\Phi_{k}^m}
- 
\tilde\Cc_m\ket{\tilde{\Phi}_{k}^m}
\Big\|\\
&=
\Big\|
(\Cc_m\ket{\Phi^m_{k}}-\tilde \Cc_m\ket{ {\Phi}^m_{k}})
+
\tilde\Cc_m(\ket{\Phi^m_{k}}-\ket{\tilde\Phi^m_{k}})
\Big\|\\
&\le
\Big\|
\Cc_m\ket{\Phi^m_{k}}-\tilde \Cc_m\ket{ {\Phi}^m_{k}}
\Big\|
+
\Big\|
\ket{\Phi^m_{k}}-\ket{ {\tilde\Phi}^m_{k}}
\Big\|\\
&\le \Big\|
\Cc_m\ket{\Phi'_{k}}-\tilde \Cc_m\ket{ {\Phi}'_{k}}
\Big\|
+
\Big\|
\Cc_m\ket{\Phi''_{k}}-\tilde \Cc_m\ket{ {\Phi}''_{k}}
\Big\|
+
\Big\|
\ket{\Phi^m_{k}}-\ket{ {\tilde\Phi}^m_{k}}
\Big\|\\
&\le
2\Big\|
\ket{\Phi''_{k}}
\Big\|
+
\Big\|
\ket{\Phi^m_{k}}-\ket{ {\tilde\Phi}^m_{k}}
\Big\|\\
&\le
2\sqrt{|X|}\times \exp\left(-\frac{m}{9|X|}\right)
+
\Big\|
\ket{\Phi^m_{k}}-\ket{ {\tilde\Phi}^m_{k}}
\Big\|,
\end{align*}
where we used Lemma \ref{lemma:state} to obtain the last inequality.
We conclude that for any $k\ge 0$ we have

\begin{align*}
\Big\|
\ket{\Phi_{k}^m}
- 
\ket{\tilde{\Phi}_{k}^m}
\Big\|
&\le
\Big\|
\ket{\Phi'_{0}}-\ket{\tilde {\Phi}'_{0}}
\Big\|
+
2k\sqrt{|X|}\times \exp\left(-\frac{m}{9|X|}\right)\\
&=
2k\sqrt{|X|}\times \exp\left(-\frac{m}{9|X|}\right)\\
&\le
\frac{2k}{m^3},
\end{align*}
where we used the assumption $|X|<m/(36\log m)$ to derive the last inequality.
This implies that the output of Algorithm $\tilde \Qq$ is the same as the output of Algorithm $\Qq$ with probability at least $1-1/m^2$. The output of $\tilde \Qq$ is thus correct with probability at least $1-2/m^2$, from the union bound.
\end{proof}

\end{document}